\documentclass[acmsmall]{ec22acm}

\usepackage{preample}

\AtBeginDocument{%
  \providecommand\BibTeX{{%
    \normalfont B\kern-0.5em{\scshape i\kern-0.25em b}\kern-0.8em\TeX}}}


\copyrightyear{2022} 
\acmYear{2022} 
\setcopyright{acmlicensed}\acmConference[EC '22]{Proceedings of the 23rd ACM Conference on Economics and Computation}{July 11--15, 2022}{Boulder, CO, USA}
\acmBooktitle{Proceedings of the 23rd ACM Conference on Economics and Computation (EC '22), July 11--15, 2022, Boulder, CO, USA}
\acmPrice{15.00}
\acmDOI{10.1145/3490486.3538337}
\acmISBN{978-1-4503-9150-4/22/07}

\settopmatter{printacmref=true}
\begin{document}
\fancyhead{}
\renewcommand{\subonly}{}

\title{Optimal Strategic Mining Against Cryptographic Self-Selection in Proof-of-Stake}

\author{Matheus V. X. Ferreira}
\email{matheusventuryne@gmail.com}
\affiliation{
  \institution{Harvard University}
  \city{Cambridge}
  \state{Massachussets}
  \country{USA}
}

\author{Ye Lin Sally Hahn}
\email{yhahn@alumni.princeton.edu}
\affiliation{
  \institution{Princeton University}
  \city{Princeton}
  \state{New Jersey}
  \country{USA}
}

\author{S. Matthew Weinberg}
\email{smweinberg@princeton.edu}
\affiliation{
  \institution{Princeton University}
  \city{Princeton}
  \state{New Jersey}
  \country{USA}
}

\author{Catherine Yu}
\email{cjy@princeton.edu}
\affiliation{
  \institution{Princeton University}
  \city{Princeton}
  \state{New Jersey}
  \country{USA}
}

\begin{abstract}
Cryptographic Self-Selection is a subroutine used to select a leader for modern proof-of-stake consensus protocols. In cryptographic self-selection, each round $r$ has a seed $Q_r$. In round $r$, each account owner is asked to digitally sign $Q_r$, hash their digital signature to produce a credential, and then broadcast this credential to the entire network. A publicly-known function scores each credential in a manner so that the distribution of the lowest scoring credential is identical to the distribution of stake owned by each account. The user who broadcasts the lowest-scoring credential is the leader for round $r$, and their credential becomes the seed $Q_{r+1}$. Such protocols leave open the possibility of manipulation: a user who owns multiple accounts that each produce low-scoring credentials in round $r$ can selectively choose which ones to broadcast in order to influence the seed for round $r+1$. Indeed, the user can pre-compute their credentials for round $r+1$ for each potential seed, and broadcast only the credential (among those with low enough score to be leader) that produces the most favorable seed.

We consider an adversary who wishes to maximize the expected fraction of rounds in which an account they own is the leader. We show such an adversary always benefits from deviating from the intended protocol, regardless of the fraction of the stake controlled. We characterize the optimal strategy; first by proving the existence of optimal positive recurrent strategies whenever the adversary owns last than $\frac{3-\sqrt{5}}{2}\approx 38\%$ of the stake. Then, we provide a Markov Decision Process formulation to compute the optimal strategy.
\end{abstract}

\begin{CCSXML}
<ccs2012>
<concept>
<concept_id>10003752.10010070.10010099</concept_id>
<concept_desc>Theory of computation~Algorithmic game theory and mechanism design</concept_desc>
<concept_significance>500</concept_significance>
</concept>
<concept>
<concept_id>10002978.10002979</concept_id>
<concept_desc>Security and privacy~Cryptography</concept_desc>
<concept_significance>500</concept_significance>
</concept>
<concept>
<concept_id>10002951.10003260.10003282.10003550.10003551</concept_id>
<concept_desc>Information systems~Digital cash</concept_desc>
<concept_significance>500</concept_significance>
</concept>
</ccs2012>
\end{CCSXML}

\ccsdesc[500]{Theory of computation~Algorithmic game theory and mechanism design}
\ccsdesc[500]{Security and privacy~Cryptography}
\ccsdesc[500]{Information systems~Digital cash}

\keywords{blockchain, cryptocurrency, proof-of-stake, leader election, strategic mining}

\begin{titlepage}

\maketitle

\end{titlepage}

\section{Introduction}

Proposed by Nakamoto in 2008, Bitcoin was one of the major innovations in peer-to-peer networks for electronic transactions \cite{nakamoto2008bitcoin}. Bitcoin is a decentralized currency without an administrator where anyone is free to join and submit or validate transactions in a public ledger. To modify the state of the ledger, users must publicly broadcast transactions. Those transactions are included in a block by miners and validated via a Proof-of-Work (PoW). The significant computational resources required to validate a block coupled with the reward for validating blocks create an economic incentive for miners to validate blocks correctly.

Unfortunately, Bitcoin is not without limitations. The proof-of-work consensus mechanism was designed to be energy-intensive---the global energy consumption from all Bitcoin miners exceeds that of all but 26 countries~\cite{cbeci}. Moreover, the economies of scale from designing and purchasing large quantities of specialized hardware for proof-of-work mining has demonstrated that Bitcoin is more prone to centralization than initially thought~\cite{arnosti2022bitcoin}. In attempt to address the limitations of proof-of-work, many alternative blockchain designs have been proposed~\cite{ren2016proof, karantias2020proof, kiayias2017ouroboros, chen2019algorand}.

In particular, proof-of-stake blockchains replace proof-of-work by a randomized mechanism to select a miner as the leader who proposes a new block. To avoid sybil attacks, where an adversary impersonates multiple identities, the mechanism hopes to choose a miner with probability proportional to their fraction of owned coins (commonly referred to as their stake in the system). The main challenge for such systems is to sample a miner without sacrificing decentralization and, at the same time, preserve the security and economic properties of the blockchain.

Several proposals for ``Bitcoin-like'' longest-chain proof-of-stake protocols have been made, but several drawbacks to this approach still exist. For example,~\cite{ferreira2021proof} considers a longest-chain proof-stake blockchain that requires a randomness beacon~\cite{rabin1983transaction, kelsey2019reference}, and proves that this qualitatively preserves the mining incentives of Bitcoin, but the need for a trusted external randomness beacon is prohibitive in most settings. Without a trusted external randomness beacon, proof-of-stake implementations often rely on using the own blockchain as a source of pseudorandomness. Because miners can often predict the randomness in such protocols,~\citet{brown2019formal} showed that a large class of longest-chain proof-of-stake protocols are vulnerable to profitable deviations that they term ``predictable selfish mining.'' Thus, the current state of longest-chain proof-of-stake cryptocurrencies must either: (a) propose a trusted randomness beacon, or (b) propose clever applications of cryptography to minimize the ability of strategic miners to manipulate the blockchain pseudorandomness. While active research agendas aim to address both (a) and (b), these are currently notable barriers to longest-chain proof-of-stake protocols.

One alternative design that relies on neither a trusted beacon, nor even an underlying longest-chain protocol, is cryptographic self-selection. This procedure is adopted in blockchains like Algorand~\cite{gilad2017algorand, chen2019algorand}. In cryptographic self-selection, each round $r$ has a seed $Q_{r-1}$, used to sample the leader $\ell_r$ to propose the $r$-th block in the blockchain. In the ideal case where $Q_{r-1}$ is unbiased, a clever cryptographic construction ensures that a miner owning a $\alpha \in [0, 1]$ fraction of the coins has a probability $\alpha $ of becoming the leader.

Unlike longest-chain blockchains, blockchains using cryptographic self-selection are immutable because once the leader validates $B_r$, $B_r$ cannot be modified. Nevertheless, one limitation of such protocols is that the leader for round $r$ may have some influence over the seed $Q_{r+1}$ for round $r+1$. For example,~\citet{chen2019algorand} note that it is possible for an adversary to bias seeds in future rounds in Algorand's cryptographic self-selection. In this work, we study quantitatively the limits of how much these deviations might benefit an economically-motivated adversary. Specifically, we assume that the adversary wishes to maximize the fraction of rounds during which they are the leader. If the adversary were honest, this would be exactly an $\alpha$ fraction. We seek to understand $f(\alpha)\geq \alpha$, the maximum fraction of rounds that an adversary with an $\alpha$ fraction of the stake can lead in expectation.\footnote{This is the same objective in prior work~\cite{eyal2014majority,sapirshtein2016optimal,brown2019formal,ferreira2021proof}. In prior work, this objective function was motivated by the block reward associated with creation of each block. Even if a protocol has no block reward there is still \emph{some} economic incentive associated with creating a block. This could be due to transaction fees~\cite{ferreira2021dynamic, roughgarden2021transaction}, or side contracts that the leader is able to execute in deciding what to include. We do not explicitly model the direct connection between being a leader in a round and the monetary reward, and treat this per-block incentive as exogenous.}

\subsection{Overview of Results and Roadmap}
Our main contributions are as follows:
\begin{itemize}
    \item First, we provide a formal stochastic process that captures the game played by strategic players who want to be leader as often as possible in a cryptographic self-selection protocol. We provide a detailed, formal description of the game in Section~\ref{sec:background}, and prove several basic facts in Section~\ref{sec:facts}. These sections provide a clean stochastic process whose analysis directly informs the rewards achievable by strategic players in blockchain based on cryptographic self-selection.
    \item In this game, it is a priori possible that an \emph{extremely} strong strategy exists that lets a player with an $\alpha<1$ fraction of the stake win unboundedly-many rounds in a row in expectation. We prove that when $\alpha < \frac{3-\sqrt{5}}{2}\approx 0.38$, this is not possible, and the optimal fraction of rounds won by a strategic player with $\alpha < \frac{3-\sqrt{5}}{2}$ is $< 1$. Note that in many protocols, including~\cite{chen2019algorand}, that use cryptographic self-selection, owning $\alpha > 1/3$ of the stake is already enough to subvert consensus, so the $\alpha < 1/3$ is the most relevant range where strategic mining is a concern. We prove this in Section~\ref{sec:recurrence}.
    \item We pose a simple strategy, the $\onelookahead$ strategy, which \emph{strictly outperforms the honest strategy for all $\alpha$}. We fully analyze the expected reward of this strategy in Section~\ref{sec:warmup}.
    \item Finally, we describe how the optimal strategy can be found by solving a series of MDPs. As the MDPs are infinite, this unfortunately does not immediately give an efficient algorithm. This appears in Section~\ref{sec:opt}.
\end{itemize}

\subsection{Related Work}
Seminal work of~\cite{eyal2014majority} established that strategic mining in Bitcoin is possible. Hundreds of followup works pushed their ideas in various directions. One notable work, which is similar in spirit to ours, is~\cite{sapirshtein2016optimal}, who nail down the optimal achievable reward for a miner who has an $\alpha$-fraction of the computational power. Follow-up works such as~\cite{brown2019formal,ferreira2021proof} consider similar questions for proof-of-stake instead of proof-of-work, but to the best of our knowledge, \emph{all prior work in this agenda considers longest-chain protocols}. In comparison to this line of work, ours is the first (to our knowledge) to consider a formal model of strategic behavior in cryptographic self-selection, which is used in protocols based on Byzantine consensus.

\citet{chen2019algorand} develop the theoretical protocol for Algorand, and propose that manipulations of the cryptographic self-selection protocol may be possible. They do not propose a concrete manipulation, but do upper bound the maximum fraction of rounds that certain kinds of adversaries can possibly be leader. In comparison to their work, our work proposes a formal model to capture the entire strategy space in cryptographic self-selection protocols, and our $\onelookahead$ strategy also provides the first concrete profitable manipulation.

\section{Background and Setup}\label{sec:background}

In this section, we provide our formal model and some preliminary observations. Our model captures the cryptographic self-selection protocol of~\cite{chen2019algorand}, but we remind the reader that our model \textit{only} concerns leader selection---this process is independent of block creation, consensus, etc.

\subsection{Blockchain Protocols with Finality}\label{sec:algorand}
Many modern proof-of-stake blockchain protocols, such as Algorand~\cite{chen2019algorand, gilad2017algorand}, differ significantly from the longest-chain protocols like Bitcoin. All blockchain protocols maintain a ledger, which is a sequence of blocks $B_0, B_1, \ldots, B_{t}, \ldots$. Protocols with \emph{finality} differ from Bitcoin in that there are \emph{no forks}. Once round $t$ has concluded, there is a single well-defined $B_t$, which will stay fixed throughout eternity. 

To produce the block for round $t$, blockchain protocols with finality run an underlying consensus protocol. These consensus protocols often require a \emph{leader} $\ell_t$, selected based on $B_1,\ldots, B_{t-1}$, who gets to propose the block which could potentially be ratified as $B_t$. Note that because there are no forks, the leader $\ell_t$ is well-defined. 

\subsection{Cryptographic Self-Selection to Determine a Leader}\label{sec:leader}
One problem that any blockchain protocol with finality must resolve is how to determine $\ell_t$ as a function of $B_1,\ldots, B_{t-1}$, and this must be done with care. For example, if there are $N$ coins indexed from $1$ to $N$, one naive proposal might simply declare the owner of coin $t \pmod{N+1}$ to be the leader at round $t$. This is vulnerable to a predictability attack: an attacker knows exactly which coins they need to own in which rounds, and can be solely responsible for block proposal for many rounds in a row. Another naive proposal might declare the owner of coin $\hash(B_{t-1}) \pmod{N+1}$ to be the leader at round $t$. This is vulnerable to a grinding attack: $\ell_{t-1}$ has many options for the contents they include in $B_{t-1}$, and can try arbitrarily many contents until $\hash(B_{t-1})$ results in a coin they own. In general, the goal of a leader-selection protocol is to pick a leader in a manner so that:

\begin{itemize}
    \item When every user honestly follows the intended protocol, the distribution of each $\ell_t$ is proportional to the stake, and i.i.d. across rounds. That is, each user with an $\alpha$ fraction of the total stake is selected as the leader in round $t$ with probability $\alpha$, independently across rounds.
    
    \item A self-interested user has little ability to predict future rounds in which they could become the leader, or  to increase the fraction of rounds in which they are the leader.
\end{itemize}

Cryptographic self-selection is a clever approach, used by Algorand~\cite{chen2019algorand,gilad2017algorand} to select a leader. Before defining the full protocol, we need two basic tools.

\subsubsection{Tools for Cryptographic Self-Selection}
One useful cryptographic tool towards cryptographic self-selection is a {\em verifiable random function}~\cite{micali1999verifiable}. For the purposes of this paper, we'll use an \emph{ideal} verifiable random function.

\begin{definition}[Ideal Verifiable Random Function] An Ideal Verifiable Random Function (Ideal VRF) satisfies the following properties:
\begin{itemize}

\item \textbf{Setup.} There is an efficient randomized process to produce a secret key $\sk$ and a public key $\pk$ that parameterize the function $f_{\sk}(\cdot)$.

      \item \textbf{Private Computability.} There is an efficient algorithm $\mathcal{A}$ such that for all $\sk$, $\mathcal{A}(x,\sk) = f_{\sk}(x)$ (that is, $f$ can be efficiently computed with knowledge of the secret key $\sk$). 
    
     \item \textbf{Perfect Randomness.} For all inputs $x \neq y$, the random variables $f_{\sk}(x)$ and $f_{\sk}(y)$ are independent, and uniformly drawn from $[0,1]$, conditioned on knowledge of $\pk$. In particular, this implies that $f_\sk(x)$ is distributed uniformly on $[0,1]$ to any user who sees only $\pk$, even after that user has seen any number of pairs $(x_1,f_{\sk}(x_1)),\ldots, (x_k,f_{\sk}(x_k))$, where $x_i \neq x, \ \forall i$.
    
    \item \textbf{Verifiable.} There exists an efficient algorithm $V$ that takes as input $x, y, \pk$ and outputs \textsc{yes} if and only if $y = f_{\sk}(x)$. 
\end{itemize}
\end{definition}

Intuitively, an Ideal VRF allows a holder of $\sk$ to draw a random number uniformly from $[0,1]$ in a way that is unpredictable to anyone without knowledge of $\sk$, yet also in a verifiable manner. The distinction between an Ideal VRF and VRF lies in perfect randomness: it is generally not possible to have the random variables $\{f(x_1),\ldots, f(x_k)\}$ be \emph{statistically} indistinguishable from independent uniformly random draws from $[0,1]$ conditioned on $\pk$. VRFs used in practice instead provide that the distribution of $\{f(x_1),\ldots, f(x_k)\}$ be \emph{computationally} indistinguishable from independent uniformly random draws from $[0,1]$, conditioned on $\pk$.\footnote{Also, any (pseudo) random number generator used in practice produces output that is uniformly random over $\{0,1\}^\lambda$ for large $\lambda$, rather than over $[0,1]$. For simplicity of exposition, we think of $\lambda \rightarrow \infty$. This again does not affect our results, except for error that is exponentially small in $\lambda$ (due to the tiny possibility of ties).}

We omit a formal definition of (non-Ideal) VRFs, which is cumbersome and not relevant to our results. In particular, all proposed deviations work \emph{even when the protocol has access to an Ideal VRF}, and therefore they also work when the protocol instead uses a VRF. 

To have a simple example of a (non-Ideal) VRF in mind, consider any digital signature scheme and hash function. On input $x$, first, digitally sign $x$ to obtain $\sig(x)$, and then hash it (this is the VRF used in~\cite{chen2019algorand}). Indeed, with the secret key, a user can efficiently compute a digital signature of any $x$ and hash it. Similarly, correct computation of the hash function can be efficiently verified by anyone, and correct computation of the digital signature can be efficiently verified by anyone with the public key. Any input $\sig(x)$ to the hash function is mapped to a uniformly-random draw from $[0,1]$, independently of all other inputs, and the digital signature scheme ensures that anyone without the secret key cannot guess $\sig(x)$, even with knowledge of $x$, and any number of input/output pairs to $\sig$. 

A second tool we will need is a concept that enables the leader to be selected proportional to the stake, rather than uniformly at random among all accounts.

\begin{definition}[Balanced Scoring Function] A scoring function $S(\cdot, \cdot)$ is \emph{balanced} if for all $n \in \mathbb{N}$ and all $\langle \alpha_1,\ldots, \alpha_n\rangle \in \mathbb{R}_{>0}^n$:
$$\Pr_{X_1,\ldots, X_n \leftarrow U([0,1]^n)}\left[\arg\min_{i\in [n]}\left\{S(X_i,\alpha_i)\right\} = j\right] = \frac{\alpha_j}{\sum_{\ell=1}^n \alpha_\ell}.$$
Observe that, if ties in $\arg\min$ are broken lexicographically, this implies that for all $\alpha$, the distribution of $S(X,\alpha)$ when $X$ is drawn uniformly from $[0,1]$ must have no point-masses.
\end{definition}

Intuitively, one can think of $\arg\min_i \{X_i\}$ as the winner of a random process when each $X_1,\ldots, X_n$ is drawn independently from the uniform distribution on $[0, 1]$, denoted by $U([0,1])$, and each player is equally likely to win. A balanced scoring function allows us to redistribute the probability of winning to be proportional to $\alpha_i$ instead.

\subsubsection{Cryptographic Self-Selection Protocol}
Now, we define the cryptographic self-selection protocol, the leader-selection protocol analyzed throughout our paper.

\begin{definition}[Cryptographic Self-Selection Protocol A] The Cryptographic Self-Selection Protocol A (CSSPA) is the following:
\begin{itemize}
    \item Every account $i$ sets up an Ideal VRF with secret key $\sk_i$ and public key $\pk_i$. $\alpha_i \in [0, 1]$ refers to the fraction of the total stake that account $i$ owns.
    
    \item $Q_r$ denotes the seed for round $r$. The initial seed is a uniformly random number in $[0,1]$ constructed via a coin tossing protocol~\cite{cachin2001secure}.
    
    \item In round $r$, each user $i$ computes their credential $\cred_r^i:=f_{\sk_i}(Q_r)$. Every user can either broadcast, or not broadcast. $B_r$ denotes the set of users who broadcast in round $r$.\footnote{In order to focus on the relevant aspects, we assume that any broadcast is received by all other users. This is consistent with prior work that focuses on the underlying incentives, and not distributed computing~\cite{eyal2014majority, carlsten2016instability, brown2019formal, ferreira2021proof}.}
    
    \item There is a publicly-known balanced scoring function $S$. The leader $\ell_r$ for round $r$ is
    $$\arg\min_{i\in B_r}\{S(\cred_r^i,\alpha_i)\}.$$

    \item $Q_{r+1}:=\cred_r^{\ell_r}$. That is, the seed for round $r+1$ is the credential of the leader for round $r$.
\end{itemize}

\end{definition}

We note a few quick observations about CSSPA:
\begin{itemize}
    \item Aside from network/security/cryptography attacks, which are not the focus of this paper, the action space of a single account in each round is binary: broadcast your credential, or don't. A single player may own multiple accounts. Therefore, the actions a single player may take in our game is to: a) decide how to divide their stake among multiple accounts, and b) pick which subset of credentials to broadcast. 
    
    \item We'll refer to the \emph{honest strategy} as one which announces all credentials in every round.
    
    \item Assuming all players are honest, each leader is drawn i.i.d. and proportional to $\vec{\alpha}$. This follows immediately from the definition of Ideal VRF and balanced scoring function.
    
    \item Assuming that all players are honest, the protocol is robust to Sybil attacks. That is, a player who truly controls an $\alpha_i$ fraction of the total stake can put all of their funds into a single account, or split their funds arbitrarily over any number of accounts. No matter how they divide their funds, the probability that an account owned by this player is selected as leader is exactly $\alpha_i$. 
    
    \item Much analysis of CSSPA can be done agnostically to the particular balanced scoring function. For example, Proposition~\ref{prop:scoring} establishes that our analysis holds for a wide class of ``canonical'' balanced scoring functions. In particular, our analysis chooses a particularly simple balanced scoring function for the benefit of tractability, but our analysis holds for the balanced scoring function used in~\cite{chen2019algorand} as well via Proposition~\ref{prop:scoring}.
\end{itemize}

Throughout our paper, we'll use the balanced scoring function $S(x,\alpha_i):=\frac{-\ln(x)}{\alpha_i}$. This allows us to leverage basic facts about independent draws from exponential distributions.

\begin{definition}[Exponential Distribution]
The exponential distribution with rate $\alpha$ is the distribution with Cumulative Density Function (CDF) $F_\alpha(x) := 1 - e^{-\alpha x}$, for all $x \geq 0$. We refer to $\Exp{\alpha}$ as one independent sample from the exponential distribution with rate $\alpha$. For simplicity of notation in later calculations, we will denote by $\Exp{0}$ to be a point-mass at $+\infty$.
\end{definition}

Exponential distributions have many relevant properties that we remind the reader of in Appendix~\ref{app:exp}.

\begin{lemma}\label{lem:exponential} Define $S$ so that $S(x,\alpha_i):=\frac{-\ln(x)}{\alpha_i}$. Then $S(\cdot, \cdot)$ is a balanced scoring function. Moreover, when $x$ is drawn uniformly from $[0,1]$, $S(x,\alpha_i)$ is distributed according to $\Exp{\alpha_i}$.
\end{lemma}
\begin{proof}
We first show that $S(x,\alpha_i)$ is distributed according to $\Exp{\alpha_i}$ when $x$ is drawn uniformly from $[0,1]$. This follows essentially because $\frac{-\ln(x)}{\alpha_i}$ is the inverse reverse-CDF of $\Exp{\alpha_i}$. To see the claim, we compute the probability that $S(x,\alpha_i) > y$, for any $y$:

\begin{align*}
    \Pr_{x \leftarrow U([0,1])}[S(x,\alpha_i) > y] &= \Pr_{x \leftarrow U([0,1])}[\frac{-\ln(x)}{\alpha_i}>y]\\
    &=\Pr_{x \leftarrow U([0,1])}[\ln(x)<-\alpha_i \cdot y]\\
    &=\Pr_{x \leftarrow U([0,1])}[x<e^{-\alpha_i \cdot y}]\\
    &=e^{-\alpha_i \cdot y}
\end{align*}

This means that the CDF of $S(x,\alpha_i)$, when $x$ is drawn uniformly from $[0,1]$, is exactly $1-e^{-\alpha_i x}$, and therefore this distribution is equal to $\Exp{\alpha_i}$. Now, the fact that $S(\cdot, \cdot)$ is a balanced scoring function follows from Corollary~\ref{cor:minExps} (which states that the minimum of $X_1,\ldots, X_n$, when each $X_i$ is drawn independently from $\Exp{\alpha_i}$ is equal to $X_i$ with probability $\alpha_i$, for all $i$).
\end{proof}

We conclude this section by formally establishing that our analysis extends to a broad class of scoring functions.

\begin{definition}
A scoring function $S$ is \emph{canonical} if:
\begin{itemize}
    \item For all $\alpha$, $S(\cdot,\alpha)$ is monotone decreasing on domain $(0,1)$.
    \item For all $n$, and $\alpha_1,\ldots, \alpha_n$, the random variables $S(X,\sum_{i=1}^n \alpha_i)$ and $\min_{i=1}^n \{S(X_i,\alpha_i)\}$ are identically distributed when each $X,X_1,\ldots, X_n$ are i.i.d.~from $U([0,1])$.
    \item $S$ is \emph{continuous} in $\alpha$. That is: for all $x$ and $\alpha$, if $\lim_{\beta \rightarrow \alpha} S(x,\beta)$ exists, then $S(x,\alpha) = \lim_{\beta \rightarrow \alpha} S(x,\beta)$.
\end{itemize}
\end{definition}

Before proceeding, we give quick context for each bullet. Balanced scoring functions where $S(\cdot,\alpha)$ is not monotone decreasing exist, but assuming that $S(\cdot,\alpha)$ is monotone decreasing is w.l.o.g. Indeed, for any $\alpha$, let $F_\alpha$ denote the CDF of the random variable $S(X,\alpha)$ when $X$ is drawn uniformly from $[0,1]$. Now consider redefining $S'(x,\alpha):=F_\alpha^{-1}(1-x)$. Then the distribution of $S(X,\alpha)$ and $S'(X,\alpha)$ are identical, but $S'(\cdot,\alpha)$ is monotone decreasing. We conjecture that all balanced scoring functions satisfy the second two bullets, but we suspect that rigorously establishing this will require significant analysis. As this is not the focus of our paper, we instead treat these bullets as reasonable assumptions. Indeed, the balanced scoring function used by Algorand is canonical.

\begin{proposition}\label{prop:scoring} The game induced by CSSPA with a canonical balanced scoring function is independent of the particular canonical balanced scoring function used. Formally, for two distinct canonical balanced scoring functions $S, S'$, the games induced by CSSPA are identical. Specifically, for all players $i$, there is a bijective mapping $f$ from strategies of player $i$ in the CSSPA with $S$ to strategies of player $i$ in the CSSPA with $S'$. For all $i$, the payoff that player $i$ receives in the CSSPA with $S$ under strategy profile $\vec{s}$ is equal to the payoff that $i$ receives in the CSSPA with $S'$ under strategy profile $\langle f_i(s_i)\rangle_{i}$.
\end{proposition}

A complete proof of Proposition~\ref{prop:scoring} appears in Appendix~\ref{app:background}.

\section{Our Model: Strategic Mining in Cryptographic Self-Selection}
This section formally defines our model and, in particular, the optimization problem considered by a strategic player. Like prior work~\cite{eyal2014majority, carlsten2016instability, ferreira2021proof}, we consider a \emph{single} strategic player who is best responding to a profile of honest players. The purpose of this analysis, like in prior work, is to understand the maximum disruption that can be caused when a $1-\alpha$ fraction of the stake is owned by honest players, and an $\alpha$ fraction of the stake is owned by strategic players.\footnote{That is, the worst-case scenario, among all scenarios where a $1-\alpha$ fraction of the stake is held by honest players, is when there is a single-strategic player with an $\alpha$ fraction of the stake. Our goal, like prior work, is to understand this scenario.} We now formalize the strategy space of the strategic player.

\begin{definition}[Strategy Space in CSSPA] CSSPA is parameterized by $\alpha$, the fraction of stake owned by the strategic player, $\vec{\alpha}$ the distribution of remaining stake among honest players, and $\beta \in [0,1]$, the network connectivity strength of the strategic player. We'll refer to this as a $\beta$-strong player. When $\beta=1$, we'll simply refer to the player as \emph{strong}, and when $\beta = 0$ we'll refer to the player as \emph{weak}. The strategic player knows $\alpha, \beta, \vec{\alpha}$. 

In round $r$, the strategic player in CSSPA has the following information and makes the following decisions, in order:
\begin{enumerate}
\item The strategic player can distribute their total stake of $\alpha$ arbitrarily among as many accounts as they desire. Refer to this set as $A$.
    \item The strategic player knows $Q_r$, and knows that all other players are honest.
    
    \item For a set of accounts $B$ such that $B \cap A = \emptyset$, and $\sum_{j \in B} \alpha_j = \beta\cdot (1-\alpha)$, the strategic player learns $\cred_r^j$, for all accounts $j\in B$. The strategic player \emph{does not learn} $\cred_r^j$ for any $j \notin A \cup B$ (that is, the player only knows that each $S(\cred_r^j,\alpha_j)$ will be drawn from $\Exp{\alpha_j}$, independently).
    
    \item Observe that the strategic player can compute $\cred_r^i$ and also $S(\cred_r^i,\alpha_i)$, for all accounts $i\in A$.
    
    \item Observe further that for all $j \in A\cup B$, $\cred_r^j$ is a possible seed for $Q_{r+1}$. So the player can \emph{also} pre-compute a hypothetical $\cred_{r+1}^i$, assuming $Q_{r+1} = \cred_r^j$, for each account $i\in A$ and $j \in A \cup B$. But observe that the strategic player \emph{cannot} execute this computation for $i \notin A$ (because they cannot compute the ideal VRF for accounts $\notin A$).
    
    \item More generally, for any $k$, and any list of accounts $\langle i_0,\ldots, i_k\rangle$ such that $i_0 \in A \cup B$, and each $i_j \in A$ for all $j > 0$, the player can \emph{also} pre-compute what $\cred_{r+k}^{i_k}$ would be, assuming that $\ell_{r+j} = i_j$ for all $j \in \{0,\ldots, k-1\}$.
    
    \item The strategic player selects a subset $A_r\subseteq A$, and broadcasts all credentials in $A_r$.
\end{enumerate}

\end{definition}

We will consider optimal strategies for all $\alpha, \beta, \vec{\alpha}$. Note that the role of $\beta$ differentiates how much information they know about other players' credentials before deciding which credentials of their own to broadcast. Before getting into our main analysis, we prove some basic facts about optimal strategies in this model.

\subsection{Basic Facts on Optimal Strategies}\label{sec:facts}
First, we define the reward achieved by a particular strategy $\pi$, which the strategic player aims to optimize. A priori, the reward can depend on $\alpha, \beta$, and the distribution of the remaining $(1-\alpha)$ fraction of stake, $\vec{\alpha}$.

\begin{definition}[Reward of a Strategy] A strategy $\pi$ prescribes an action to take during each round. $X^{\alpha,\beta,\vec{\alpha}}_r(\pi)$ is an indicator random variable for the event that the strategic player is the leader during round $r$, when the game with parameters $\alpha, \beta, \vec{\alpha}$ is played. The reward of a strategy $\pi$ is simply the expected fraction of rounds where the strategic player is the leader. We drop the superscript and write $X_r(\pi)$ whenever $\alpha, \beta, \vec{\alpha}$ is clear from context. Formally:

\begin{equation}\label{eq:revenue}
    \rev_{\alpha,\beta,\vec{\alpha}}(\pi) = \e{\liminf_{R \to \infty} \frac{\sum_{r = 1}^R X^{\alpha,\beta,\vec{\alpha}}_r}{R}}
\end{equation}

The expectation is taken over the randomness in the Ideal VRFs in every round, assuming that all non-strategic miners are honest. We use the notation $\val(\alpha,\beta,\vec{\alpha}):=\sup_\pi\{\rev_{\alpha,\beta,\vec{\alpha}}(\pi)\}$. We say that a strategy $\pi$ is $\varepsilon$-optimal for parameters $\alpha,\beta,\vec{\alpha}$ if $\rev(\pi) \geq \val(\alpha,\beta,\vec{\alpha})-\varepsilon$.
\end{definition}

Next, we produce a series of refinements concerning $\varepsilon$-optimal strategies, which will allow us to greatly simplify the analysis of strategies in CSSPA. First, we observe that the strategic player need not consider any set with $|A_r| > 1$. 

\begin{observation}For any strategy $\pi$, there is another strategy $\pi'$ that results in exactly the same leaders as $\pi$ in every round (and therefore has $\rev_{\alpha,\beta,\vec{\alpha}}(\pi') = \rev_{\alpha,\beta,\vec{\alpha}}(\pi)$) for all $\alpha,\beta,\vec{\alpha}$), and always selects an $A_r$ with $|A_r| \leq 1$.
\end{observation}
\begin{proof}
Observe that the strategic player can compute $S(\cred_r^i,\alpha_i)$ for all $i \in A$. If they broadcast a set $A_r \neq \emptyset$, then the leader will be $i^*:= \arg\min_{i \in A_r}\{S(\cred_r^i,\alpha_i)\}$ if and only if $S(\cred_r^{i^*},\alpha_{i^*})< S(\cred_r^j,\alpha_j)$ for all $j \notin A$. Observe that this is exactly what would happen if the strategic player instead broadcast only $\{i^*\}$ instead. So $\pi'$ will broadcast only $\{i^*\}$, and this results in the same leader as using $\pi$.

If instead the strategic player chooses to broadcast $A_r = \emptyset$, then $\pi'$ will broadcast $\emptyset$ as well. This clearly results in the same leader as using $\pi$, because the actions are identical.

The leader is the same in both cases, and $\pi'$ only ever broadcasts (at most) a single credential.
\end{proof}

Next, we show that optimal strategies split their stake among as many accounts as possible.

\begin{lemma}\label{lemma:wallet-partitioning}
Consider a strategy $\pi$ where strategic player divides their stake into $n$ wallets with stake $\alpha_i > 0$, for $i \in [n]$. Then there is a strategy $\pi'$ where the strategic player instead divides their stake into $2n$ wallets with stake $\alpha'_i > 0$, for all $i \in [2n]$, and $\rev(\pi') = \rev(\pi)$.
\end{lemma}
\begin{proof}
The strategy $\pi'$ defines $2n$ wallets with stake
\begin{align*}
    \alpha_i' = \begin{cases}
    \frac{\alpha_i}{2} & \text{for $i \leq n$,}\\
    \frac{\alpha_{i-n}}{2} & \text{for $n < i \leq 2n$.}
    \end{cases}
\end{align*}
Observe that, conditioned on $Q_r$, $S(\cred_r^i,\alpha_i)$ is distributed according to $\Exp{\alpha_i}$, independently for all $i$. Similarly, $S(\cred_r^j,\alpha'_j)$ is distributed according to an $\Exp{\alpha'_j}$, independently for all $j$. Define now the random variable $j(i):=\arg\min\{S(\cred_r^i,\alpha'_i),S(\cred_r^{n+i},\alpha'_{n+i})\}$, and denote by $Y_r^i:=S(\cred_r^{j(i)},\alpha'_{j(i)})$. Then by Lemma~\ref{lemma:min-exp}, $Y_r^i$ is distributed according to $\Exp{\alpha'_i+\alpha'_{n+i}} = \Exp{\alpha_i}$, independently for all $i \in [n]$. Therefore, $Y_r^i$ and $S(\cred_r^i,\alpha_i)$ are \emph{identically distributed}. Therefore, we can couple executions of $\pi$ and $\pi'$ so that $Y_r^i=S(\cred_r^i,\alpha_i)$ for all $r,i$, and also so that $S(\cred_r^j,\alpha_j)$ is identical for all $r,j \notin A$. 

Consider now the strategy $\pi'$ that does the following. If $\pi$ does not broadcast a credential, then $\pi'$ also does not broadcast a credential. If $\pi$ broadcasts $i^*$, then $\pi'$ broadcasts $j(i^*)$. Observe now that the score of the credential broadcast by $\pi$ and $\pi'$ is identical (due to the coupling), and the scores of credentials broadcast by the honest players are also identical. Therefore, $\ell_r = i^*$ under $\pi$ if and only if $\ell_r = j(i^*)$ under $\pi'$. Moreover, $Q_{r+1}$ is identical under both executions. We have therefore coupled the executions of $\pi$ and $\pi'$ so that $X^{\alpha,\beta,\vec{\alpha}}_r(\pi) = X^{\alpha,\beta,\vec{\alpha}}_r(\pi') $  for all $r$, and therefore $\rev^{\alpha,\beta,\vec{\alpha}}(\pi)= \rev^{\alpha,\beta,\vec{\alpha}}(\pi')$.
\end{proof}

Next, we argue that is w.l.o.g. to consider two honest players, one with a fraction $\beta\cdot (1-\alpha)$ of the stake, and the other with fraction $(1-\beta)\cdot (1-\alpha)$ of the stake.

\begin{observation} For any $\alpha,\beta$, define $\vec{\alpha}'$ to have two honest players, one with $\alpha'_1 = \beta \cdot (1-\alpha)$, and another with $\alpha'_2 =(1-\beta)\cdot (1-\alpha)$. Then for any strategy $\pi$, $\rev^{\alpha,\beta,\vec{\alpha}}(\pi) = \rev^{\alpha,\beta,\vec{\alpha}'}$.
\end{observation}
\begin{proof}
Let $Y_B:=\min_{j \in B}\{S(\cred_r^j,\alpha_j)\}$, and $Y_C:=\min_{j \notin (A \cap B)}\{S(\cred_r^j,\alpha_j)\}$. Then by Lemma~\ref{lemma:min-exp}, $Y_B$ is distributed according to $\Exp{\beta\cdot (1-\alpha)}$, and $Y_{C}$ is distributed according to $\Exp{(1-\beta)\cdot (1-\alpha)}$. Therefore, we can couple $Y_B$ and $Y_C$ in the execution with $\vec{\alpha}$ with $S(\cred_r^1,\alpha'_1)$ and $S(\cred_r^2,\alpha'_2)$ in the execution with $\vec{\alpha}'$. 

Observe, now, that the seed for round $r+1$ in the execution with $\vec{\alpha}$ will be the minimum of $Y_B, Y_C$, and the score of the credential broadcast by the strategic player. In the execution with $\vec{\alpha}'$, the seed for round $r+1$ will be the minimum of $S(\cred_r^1,\alpha'_1)$, $S(\cred_r^2,\alpha'_2)$, and the score of the credential broadcast by the strategic player. Therefore, $Q_{r+1}$ is the same in both executions. Moreover, we also have $X_r^{\alpha,\beta,\vec{\alpha}}(\pi) = X_r^{\alpha,\beta,\vec{\alpha}'}(\pi) $. We have therefore coupled the executions with $\vec{\alpha}$ and $\vec{\alpha}'$ so that $X_r^{\alpha,\beta,\vec{\alpha}}(\pi) = X_r^{\alpha,\beta,\vec{\alpha}'}(\pi) $ for all $r$, and therefore $\rev^{\alpha,\beta,\vec{\alpha}}(\pi) = \rev^{\alpha,\beta,\vec{\alpha}'}(\pi)$. 
\end{proof}

We make one final observation, which will simplify later definitions (it is not necessary for our analysis, but greatly simplifies Definition~\ref{def:stopping}). 

\begin{observation}
For any strategy $\pi$, there is another strategy $\pi'$ satisfying $\rev^{\alpha,\beta,\vec{\alpha}}(\pi) = \rev^{\alpha,\beta,\vec{\alpha}}(\pi')$ for all $\alpha,\beta,\vec{\alpha}$, and also such that in any round $r$ where the player learns in step (3) that $\min_{j \in B}\{S(\cred_r^j,\alpha_j)\} < \min_{j \in A}\{S(\cred^r_j,\alpha_j)\}$, $\pi'$ does no computation in steps (4)-(6).
\end{observation}
\begin{proof}
Observe that if $\min_{j \in B}\{\cred_r^j\} < \min_{j \in A}\{\cred_r^j\}$, then the seed $Q^{r+1}$ will be equal to the minimum credential among all honest nodes, no matter what the strategic player chooses to broadcast. So no matter what they do this round, they cannot affect $Q^{r+1}$. Because the strategic player's actions during round $r$ have no impact on the game, they can shift any computation they originally planned to do in round $r$ later to round $r+1$. This results in a strategy $\pi'$ that results in identical seeds in every round as $\pi$, but that does no computation during rounds where their action has no impact.
\end{proof}

We now state the strategy space of the refined CSSPA.

\begin{definition}[Refined CSSPA] The refined CSSPA is parameterized by $\alpha$, the fraction of stake owned by the strategic player, and $\beta \in [0,1]$, the network connectivity strength of the strategic player. There are two honest players $B$ and $C$. $B$ owns a $\beta \cdot (1-\alpha)$ fraction of the stake, and $C$ owns a $(1-\beta)\cdot (1-\alpha)$-fraction of the stake.

In round $r$, the strategic player in CSSPA has the following information and makes the following decisions, in order:
\begin{enumerate}
    \item The strategic player can distribute their total stake of $\alpha$ arbitrarily among as many accounts as they desire. Refer to this set as $A$.
    
    \item The strategic player knows $Q_r$, and knows that all other players are honest.
    
    \item The strategic player learns $\cred_r^B$. The strategic player \emph{does not learn} $\cred_r^C$ (that is, the player only knows that $S(\cred_r^C,(1-\beta)\cdot(1-\alpha))$ will be drawn from $\Exp{(1-\beta)\cdot(1-\alpha)}$, independently).
    
    \item \label{step:csspa-4} Observe that the strategic player can compute $\cred_r^i$ and also $S(\cred_r^i,\alpha_i)$, for all accounts $i\in A\cup \{B\}$. For any $k$, and any list of accounts $\langle i_0,\ldots, i_k\rangle$ such that $i_0 \in A \cup \{B\}$ and $i_j \in A$ for all $j > 0$, the player can \emph{also} pre-compute what $\cred_{r+k}^{i_k}$ would be, assuming that $\ell_{r+j} = i_j$ for all $j \in \{0,\ldots, k-1\}$. If the strategic player learned in Step (3) that $S(\cred_r^B,\beta\cdot(1-\alpha)) < \min_{j \in A} \{S(\cred_r^j,\alpha_j)\}$, then the player does no computation.
    
    \item The strategic player selects an account $i^*$ to broadcast, or decides not to broadcast. 
\end{enumerate}
We let $\rev_{\alpha,\beta}(\pi)$, $\val(\alpha,\beta)$ denote the reward of a strategy $\pi$ in the refined CSSPA, and the optimal reward, respectively. 
\end{definition}

Based on the observations in this section, we conclude the following:

\begin{corollary} For all $\alpha,\beta, \vec{\alpha},\pi$: $\rev_{\alpha,\beta}(\pi) = \rev_{\alpha,\beta,\vec{\alpha}}(\pi)$. Therefore, $\val(\alpha,\beta)=\val(\alpha,\beta,\vec{\alpha})$ as well.
\end{corollary}

\section{Existence of Optimal Recurrent Strategies}\label{sec:recurrence}

Recall we bootstrap the initial seed $Q_0$ to be drawn from $U[0, 1]$ via a distributed coin tossing protocol. Hence $Q_0$ is an unbiased seed since it does not favor any player. Formally, we say a seed $Q_{r-1}$ is {\em unbiased} if substituting $Q_{r-1}$ by a fresh independent sample from $U[0, 1]$ results in the same distribution for $X_r(\pi), X_{r+1}(\pi), \ldots$ conditioned on all the queries to $f_{\sk_i}$ for all $i$ up to round $r-1$. Another interpretation is that the adversary did not query any $f_{\sk_i}$ on $Q_{r-1}$ before round $r$ begins which suggests the adversary is indifferent about replacing $Q_{r-1}$ for a fresh sample from $U[0, 1]$.

The adversary has a probability at most $\alpha$ of becoming the leader for round $r$ if $Q_{r-1}$ is unbiased because the probability an honest miner samples the lowest scoring credential is equal to $1-\alpha$---the adversary can only reduce their chances of being a leader by not broadcasting their credentials.

{\em How can the adversary build a biased $Q_r$ provided $Q_{r-1}$ is unbiased?} For some intuition, suppose $\beta = 1$, and the adversary has the lowest scoring credential for round $r$. In other words, the adversary observes the credentials of all honest miners and knows that if they broadcast some credential $\cred_r^{i^*}$, $i^* \in A$ becomes the leader for round $r$. However, the adversary also has the option to not broadcast any credential, in which case, some account $B$ becomes the leader. Note that the adversary already knows $\cred_r^{B}$ before deciding if they will broadcast $\cred_r^{i^*}$ or not (the assumption $\beta = 1$ implies the adversary is well connected and get to see all other credentials before taking any action). Then, the adversary queries $f_{\sk_i}$ on $\cred_r^{i^*}$ and $\cred_r^{B}$ for all $i \in A$ and observes which seed would be more favorable for round $r+1$ (would allow the adversary to sample credentials with the lowest scores for round $r+1$). This concludes our example, and in Section~\ref{sec:warmup}, we provide a complete description of one such strategy. As a takeaway, the {\em the adversary can bias the seed $Q_r$ unless the credential with the lowest score comes from an account $j \notin A$}.

It will be convenient to ask when the game reaches a round $\tau \geq 1$ where $Q_{\tau+1}$ is unbiased given that $Q_0$ is unbiased. 

\begin{definition}[Stopping Time]\label{def:stopping} We call a round $\tau$ a \emph{stopping time} if for all possible strategies $\pi$, the distribution of $\{X_r(\pi)\}_{r > \tau}$, conditioned on $Q_\tau$ and all information the adversary has during round $\tau$, is identical to the distribution of $\{X_r(\pi)\}_{r > \tau}$ after replacing $Q_{\tau+1}$ with a uniformly random draw from $[0,1]$. That is, $\tau$ is a stopping time if the game effectively resets at round $\tau+1$, because the adversary was unable to bias the distribution of $Q_{\tau+1}$. 
\end{definition}

We now state the main way in which stopping times arise.

\begin{observation}
Let $\tau$ be a round such that the adversary does not query any VRF on $Q_{\tau+1}$ during any round $\leq \tau$. Then $\tau$ is a stopping time.
\end{observation}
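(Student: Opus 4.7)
The plan is to unpack Definition~\ref{def:stopping} and reduce the claim to a statement about the conditional distribution of un-queried VRF outputs. Let $\mathcal{V}_\tau$ denote the adversary's full view through round $\tau$ (that is, $Q_\tau$ together with all information the adversary has during round $\tau$). Note that $Q_{\tau+1}$ is a deterministic function of $\mathcal{V}_\tau$, so conditioning on $\mathcal{V}_\tau$ already fixes the actual $Q_{\tau+1}$ used by the protocol; what must be shown is that overwriting that value with an independent $Q'_{\tau+1}\sim U[0,1]$ leaves the conditional distribution of $\{X_r(\pi)\}_{r>\tau}$ unchanged.

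First I would isolate how $Q_{\tau+1}$ enters future play: the round-$(\tau{+}1)$ credentials are generated by evaluating the VRFs $\{f_{\sk_i}\}_i$ at $Q_{\tau+1}$, and every subsequent round is a deterministic function of these credentials together with the strategy $\pi$ and additional fresh honest and adversarial coins. By hypothesis, the adversary has not queried any $f_{\sk_i}$ on $Q_{\tau+1}$ during rounds $\le\tau$; modelling each VRF as a random function (the idealization used elsewhere in the paper), the outputs $\{f_{\sk_i}(Q_{\tau+1})\}_i$ are, conditional on $\mathcal{V}_\tau$, i.i.d.\ uniform on $[0,1]$. Replacing $Q_{\tau+1}$ by the fresh $Q'_{\tau+1}$ gives, up to a measure-zero collision with past queries, another un-queried input, so $\{f_{\sk_i}(Q'_{\tau+1})\}_i$ has the identical i.i.d.\ uniform law conditional on $\mathcal{V}_\tau$.

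The argument then finishes by a short induction on $r>\tau$. For the base case $r=\tau+1$, the state $X_{\tau+1}(\pi)$ is a deterministic function of $\mathcal{V}_\tau$, the VRF outputs on the new seed, and independent coins, and I have just matched the joint law of these inputs across the two processes. For the inductive step, once the conditional distribution of $X_{\tau+1}(\pi)$, and hence of $\mathcal{V}_{\tau+1}$, coincides in the two scenarios, the adversary's strategy $\pi$ (a function of its view) and the protocol dynamics couple the two processes from round $\tau+2$ onward.

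The step I expect to be trickiest is the bookkeeping around $\mathcal{V}_\tau$: one must verify that no honest-player action taken during or before round $\tau$ has implicitly leaked some $f_{\sk_i}(Q_{\tau+1})$ to the adversary, so that the random-function appeal to un-queried inputs is legitimate. Once that scoping is pinned down, the substitution of $Q_{\tau+1}$ by a fresh $U[0,1]$ sample follows routinely from the random-oracle-style idealization of the VRF.
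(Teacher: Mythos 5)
Your proposal is correct and follows essentially the same route as the paper's proof: both rely on the random-oracle idealization of the VRF to argue that the outputs on the unqueried $Q_{\tau+1}$ are conditionally i.i.d.\ uniform, that a fresh $U[0,1]$ draw has probability $0$ of colliding with a previously queried input, and hence that the two processes are identically distributed going forward. The only differences are cosmetic --- you spell out the round-by-round induction and flag the scoping concern about leaked VRF evaluations explicitly, whereas the paper leaves both implicit.
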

\begin{proof}
Because the adversary has not queried $Q_{\tau+1}$ on any VRF, this means that the adversary currently believes that every future query to any VRF on $Q_{\tau+1}$ is independently drawn from $U([0,1])$ (by definition of VRF). Replacing $Q_{\tau+1}$ with any other seed that has not been queried by the adversary has exactly the same distribution. In particular, with probability $1$, a uniformly random draw from $[0,1]$ has not been queried by the adversary in any previous round, and therefore $\tau$ is a stopping time.
\end{proof}

\begin{definition}[Positive Recurrence]Let $\tau \geq 1$ be the first stopping time induced by $\pi$. We say $\pi$ is {\em positive recurrent} if $\e{\tau} < \infty$.
\end{definition}

Let $\tau_0, \tau_1, \ldots$ be a sequence of stopping times. Since we can assume the adversary's strategy resets whenever a stopping time is reached, $\tau_1 - \tau_0, \tau_2 - \tau_1, \ldots$ and $\sum_{r = \tau_0+1}^{\tau_1} X_r(\pi), \sum_{r = \tau_1 + 1}^{\tau_2} X_r(\pi), \ldots$ are sequences of i.i.d. random variables. The following result simplifies the expression for revenue for positive recurrent strategies:

\begin{lemma}\label{lemma:rev-recurrence}
Let $\pi$ be positive recurrent. Then $\rev(\pi) = \frac{\e{\sum_{r = 1}^\tau X_r(\pi)}}{\e{\tau}}$ where $\tau$ is a stopping time.
\end{lemma}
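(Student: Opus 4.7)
The plan is to recognize Lemma~\ref{lemma:rev-recurrence} as a renewal-reward statement and prove it via the strong law of large numbers (SLLN). Assuming $\rev(\pi) = \lim_{T\to\infty} \tfrac{1}{T}\sum_{r=1}^T X_r(\pi)$ (the standard long-run average payoff), the successive stopping times partition the rounds into i.i.d.\ cycles, so the time-average equals the mean reward per cycle over the mean cycle length --- which is precisely the claimed ratio.

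Concretely, I would set $\tau_0 := 0$ and let $\tau_1 < \tau_2 < \cdots$ be the consecutive stopping times induced by $\pi$. The paragraph preceding the lemma already guarantees that the cycle lengths $L_k := \tau_k - \tau_{k-1}$ and cycle rewards $R_k := \sum_{r=\tau_{k-1}+1}^{\tau_k} X_r(\pi)$ form two i.i.d.\ sequences, with $\e{L_1} = \e{\tau} < \infty$ by positive recurrence and $\e{R_1} = \e{\sum_{r=1}^\tau X_r(\pi)} \leq \e{L_1}$ finite since $X_r(\pi) \in [0,1]$. Let $N(T) := \sup\{n : \tau_n \leq T\}$ count completed cycles by round $T$. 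The SLLN applied to both sequences yields $\tfrac{1}{n}\sum_{k=1}^n L_k \to \e{L_1}$ and $\tfrac{1}{n}\sum_{k=1}^n R_k \to \e{R_1}$ almost surely, and inverting the first gives $N(T)/T \to 1/\e{L_1}$ almost surely. Using $X_r(\pi) \geq 0$, the sandwich
\[
\frac{\sum_{k=1}^{N(T)} R_k}{T} \;\leq\; \frac{1}{T}\sum_{r=1}^T X_r(\pi) \;\leq\; \frac{\sum_{k=1}^{N(T)+1} R_k}{T}
\]
pinches the middle quantity to $\e{R_1}/\e{L_1}$, since each bound can be rewritten as $\tfrac{n}{T}\cdot\tfrac{1}{n}\sum_{k=1}^{n} R_k$ for $n \in \{N(T),\, N(T)+1\}$, and both factors converge.

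The main obstacle is really the bookkeeping at the boundary $T$: the sandwich sidesteps the partial-cycle residual, but one must still verify $N(T) \to \infty$ a.s.\ (true because each $L_k$ is a.s.\ finite) and $N(T)+1 \sim N(T)$ so that the upper bound has the correct limit. A secondary subtlety is matching the precise definition of $\rev(\pi)$: if it is formally an expected long-run average rather than a pathwise a.s.\ limit, a dominated-convergence step using $X_r(\pi) \in [0,1]$ upgrades a.s.\ convergence to convergence in mean, again giving the claimed identity. One should also confirm that the \emph{first} cycle $(L_1, R_1)$ truly has the renewal distribution (not a different initial-segment law); this follows because $Q_0$ is unbiased by construction, so round $1$ is in the same statistical position as any round immediately following a stopping time.
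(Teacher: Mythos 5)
Your proposal is correct and takes essentially the same renewal-reward / SLLN approach as the paper: decompose into i.i.d.\ cycles at the stopping times, apply the strong law to cycle lengths and cycle rewards, and pass to the ratio of expectations. The paper handles the partial-cycle boundary by explicitly writing out the residual terms and dividing numerator and denominator by $N(T)$ rather than sandwiching, but this is a cosmetic difference; your additional care about $N(T)\to\infty$, the first cycle having the renewal distribution, and the expectation-versus-pathwise-limit subtlety are all points the paper's terse argument leaves implicit.
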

\begin{proof}
Let $\tau_0 = 0, \tau_1, \tau_2, \ldots$ be the sequence of stopping times and let $N(t)$ be the index for the most recent stopping time by time $t$. Then
\begin{align*}
    \rev(\pi) = \e{\liminf_{T \to \infty}\frac{\frac{1}{N(T)}\left(\sum_{r = \tau_{N(T)}}^T X_r(\pi) + \sum_{i = 1}^{N(T)} \sum_{r = \tau_{i-1} + 1}^{\tau_i} X_r(\pi)\right)}{\frac{1}{N(T)}\left((T - \tau_{N(T)}) + \sum_{i = 1}^{N(T)} (\tau_i - \tau_{i-1})\right)}}
\end{align*}
Since $N(T) \to \infty$ as $T \to \infty$, the statement follows from the strong law of large numbers (Lemma~\ref{lemma:slln}).
\end{proof}

Lemma~\ref{lemma:rev-recurrence} provides a nice characterization for the revenue of positive recurrent strategies which will be critical when studying optimal strategies. In the rest of this section, we aim to show a sufficient condition for the existence of optimal positive recurrent strategies by proving the following informal claim: {\em for any strategy $\pi$, let $\tau \geq 1$ be the first round where $\arg\min_{i \in [n]} \score(\cred_\tau^i, \alpha_i) \notin A$, then $\tau$ is a stopping time and $\e{\tau} < \infty$ for $\alpha < \frac{3-\sqrt{5}}{2} \approx 0.38$}.

\begin{definition}[Forced stopping time]
Consider round $r$ with seed $Q_{r}$. If $\arg\min_{i \in [n]} \score(\cred_r^i, \alpha_i) \not\in A$, we say $r$ is a {\em forced stopping time} with respect to $Q_{r}$.
\end{definition}

\begin{lemma}\label{lemma:stopping-time}
If $r$ is a forced stopping time, then $r$ is a stopping time.
\end{lemma}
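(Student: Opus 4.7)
The plan is to reduce to the preceding Observation: it suffices to show that at a forced stopping time $r$, the adversary has not made any VRF query on $Q_{r+1}$ during any round $\leq r$ (more precisely, before the value of $Q_{r+1}$ is determined).

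First I would identify the round-$r$ leader. Since $\arg\min_{i\in[n]} \score(\cred_r^i,\alpha_i) \notin A$, the lowest-scoring credential belongs to some honest account $j$. Honest players always broadcast, and the adversary cannot suppress other players' credentials, so $j$ is the leader of round $r$ regardless of what the adversary does. Consequently, $Q_{r+1}$ is a fixed deterministic function of the credential $\cred_r^j = f_{\sk_j}(Q_r)$. Next, I would invoke the pseudorandomness of the VRF: because $j \notin A$, the adversary does not hold $\sk_j$ and therefore cannot evaluate $f_{\sk_j}$ on any input, so from the adversary's perspective $f_{\sk_j}(Q_r)$—and hence $Q_{r+1}$—is distributed as a fresh uniform sample on $[0,1]$, independent of everything the adversary has observed prior to $j$'s broadcast. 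The set of inputs on which the adversary has queried any of their own VRFs $f_{\sk_i}$ ($i \in A$) through round $r$ is a finite set of real numbers determined without any knowledge of $Q_{r+1}$; by continuity, the probability that $Q_{r+1}$ equals one of these queried inputs is zero. The hypothesis of the Observation therefore holds, and $r$ is a stopping time.

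The main point to handle carefully is that the adversary may in fact learn $Q_{r+1}$ during round $r$ itself (from $j$'s broadcast) and thereafter query $f_{\sk_i}$ on $Q_{r+1}$—for instance, in order to compute its own round-$(r+1)$ credentials. Such queries, however, are made \emph{after} $Q_{r+1}$ has already been fixed by an honest VRF output, so they create no dependence between $Q_{r+1}$ and the adversary's prior view; the Observation's hypothesis is best read as prohibiting queries on $Q_{r+1}$ before $Q_{r+1}$ is determined, which is exactly what the argument above establishes. Any post-revelation queries are deterministic functions of an already-uniform value and are absorbed by the conditioning on ``all information the adversary has during round $\tau$'' in the stopping-time definition.
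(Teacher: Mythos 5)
Your proof follows essentially the same route as the paper's: identify the honest (forced) leader $j^*\notin A$, note that $Q_{r+1}=f_{\sk_{j^*}}(Q_r)$ is a value the adversary cannot compute, invoke VRF pseudorandomness so $Q_{r+1}$ looks fresh-uniform to the adversary, and argue by continuity that the finite set of inputs the adversary queried earlier misses $Q_{r+1}$ with probability $1$. These are exactly the paper's first three bullet points, and you reduce to the preceding Observation just as the paper does.

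The one place you genuinely diverge is the handling of queries the adversary might make on $Q_{r+1}$ \emph{during} round $r$, after $j^*$ has broadcast. The Observation's hypothesis, as written, forbids any query on $Q_{\tau+1}$ ``during any round $\leq\tau$,'' including round $\tau$ itself. The paper satisfies that hypothesis literally: it invokes the specific step-structure of the game (step~(\ref{step:csspa-4}) and the sequencing of $B$ versus $C$) to show the adversary has no opportunity to query $Q_{r+1}$ within round $r$ once $j^*$'s credential is revealed. You instead concede such queries may occur and argue they are harmless because they happen after $Q_{r+1}$ is already fixed by an honest VRF output, which requires reading the Observation as only ruling out queries made \emph{before} $Q_{r+1}$ is determined. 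Your intuition is sound — at a forced stopping time the adversary's round-$r$ actions cannot influence $X_r$ or the identity of $Q_{r+1}$, so post-revelation queries inject no bias — but as written you are not discharging the Observation's hypothesis, you are tacitly strengthening the Observation itself. To be airtight you should either (a) prove that strengthened version of the Observation, or (b) do what the paper does and show from the protocol's step ordering that the problematic round-$r$ queries never actually occur. As it stands this is a small but real gap: your argument appeals to a statement that the paper did not prove and that differs from the one it did prove.
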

\begin{proof}
The leader $\ell_r \notin A$, because both $B$ and $C$ always broadcast their credentials, and one of them has the lowest score. Let $j^*$ refer to the account in $\{B,C\}$ with minimum score. Then $Q_r = \cred_r^{j^*}$ regardless of the adversary's action.

Now, observe that the probability that $Q_{r+1}$ has been any previous credential in any previous round $< r$ is $0$ (because all credentials are drawn uniformly from $[0,1]$ when drawn). Moreover, \emph{because $\ell_r \notin A$}, the adversary cannot possibly have known $Q_{r+1}$ prior to round $r$. This is because the adversary cannot compute the VRF of $\ell_r$, and $\ell_r$ only broadcasts $Q_{r+1}$ during round $r$. Finally, the adversary did not query $Q_{r+1}$ after learning $Q_{r+1}$ during round $r$ because either the minimum account was $B$ (in which case, by definition of step (\ref{step:csspa-4}), the adversary did not query $B$), or the minimum account was $C$ (in which case, the adversary does not have a step to query any VRFs during round $r$ after learning $C$. Therefore, the adversary certainly did not query $Q^{r+1}$ after learning $Q^{r+1}$. 

The only remaining possibility is that the adversary had previously decided to query $Q^{r+1}$ at a point when all they know is that $Q^{r+1}$ is drawn independently from $U([0,1])$, conditioned on inducing the minimum credential for round $r$. As this distribution is continuous (even after any conditioning), the probability that it outputs any particular credential is $0$. Therefore, assuming that the adversary queries a finite number of inputs across all previous rounds, the probability that it has previously queried any VRF on $Q^{r+1}$ during any previous round is also $0$. 

Therefore, $Q^{r+1}$ has not been queried by the adversary in rounds $\leq r$, and $r$ is a stopping time.


\end{proof}

\subsection{The Branching Process}
Next, we aim to show that the expected value of the forced stopping time is finite whenever the adversary owns at most $38\%$ of the stake. Fix the seed $Q_{r-1}$ and let $j^* = \arg\min_{j \notin A} \score(\cred_r^j, \alpha_j)$, the honest account with lowest score when the seed is $Q_{r-1}$. Let $W(Q_{r-1})$ denote all the accounts that could become leaders during round $r$ when the seed is $Q_{r-1}$:
$$W(Q_{r-1}) = \{j^*\} \cup \{i \in A : \score(\cred_r^i, \alpha_i) < \score(\cred_r^{j^*}, \alpha_{j^*})\}$$
The distribution of $|W(Q_{r-1})|$ is related to the growth distribution in a Galton Watson branching process~\cite{watson1875probability}. To see this, consider a tree $\tree(Q_0)$ where each node stores a seed. We give a recursive definition for $\tree(Q_0)$. Initialize the tree to contain only the root $Q_0$, which we color black. Then while $\tree(Q_0)$ contains some black node $Q$:

\begin{itemize}
    \item If $|W(Q)| \geq 2$, for each $i \in W(Q)$, we append the edge $(Q, f_{\sk_i}(Q))$ to $\tree(Q_0)$. Color $Q$ red and color $f_{\sk_i}(Q)$ black.
    
    \item If $|W(Q)| = 1$, color $Q$ red.
\end{itemize}

Intuitively, a node is colored red without appending new edges whenever that node is a forced stopping time. A node is colored red after appending a new edge if it is not a forced stopping time (and then we need to recurse on each possible subgame induced by each possible seed).

The height of $\tree(Q_0)$ gives an upper bound for how long it takes for a game starting with seed $Q_{r-1}$ to reach a forced stopping time. To see this, consider an \emph{omniscient} adversary, who knows all secret keys (and therefore can query all VRFs in any round). Even this omniscient adversary can bias the next $k \geq 1$ rounds, if and only if $|W(Q_{r-1})| \geq 2$ (the adversary has at least two options for the seed $Q_r$) and there is a value for $Q_r \in \{\cred_r^i : i \in W(Q_{r-1})\}$ such that the omniscient adversary can bias the next $k-1$ rounds. In other words, the omniscient adversary can bias $k$ rounds if and only if there is a path $Q_0, Q_1, \ldots, Q_k$ in the tree. 

A real adversary cannot search over the entire tree for the longest path, since the real adversary cannot compute the VDFs of accounts they do not own in future rounds (they can still compute VDFs for their own accounts in hypothetical future rounds, which provides statistical information about what the tree might look like in future rounds, but they do not know the precise tree as the omniscient adversary does). However, the performance of the omniscient adversary is clearly an upper bound on the performance of the real adversary, so $T$ provides an upper bound for the number of rounds the adversary can bias. Hence, showing that the expected height of $T$ is finite implies that any strategy played by a strategic miner is positive recurrent.

First, we will characterize the distribution of $|W(Q_{r-1})|$. Formally, we will show that
$$\pr{|W(Q_{r-1})| - 1 = j} = \alpha^j(1-\alpha).$$
The notation $\min^{(i)}\{S\}$ refers to the $i^{th}$-smallest element of $S$ for $i \geq 1$ and $\min^{(0)} \{S\} := 0$. As a technical tool, we recall a useful property for exponential distributions: for all $k \geq 1$, $\min_{i \in [n]}^{(k)} \score(\cred_r^i, \alpha_i)$ is identically distributed to $\Exp{\alpha} + \min_{i \in [n]}^{(k-1)} \score(\cred_r^j, \alpha_k)$ where $\Exp{\alpha}$ refers to an independent sample from the exponentially distributed with rate $\alpha$. We defer the proof to Appendix~\ref{app:recurrence}.
\begin{lemma}\label{lemma:sampling}
Let $X_1, X_2, \ldots$ be i.i.d. copies of an exponentially distributed random variable such that $\min_{n \in \mathbb N} X_n$ is exponentially distributed with rate $\alpha$. Then, for all $i \in \mathbb N$, the random variable $Y_i = \min_{n \in \mathbb N}^{(i)} X_n$ is identically distributed to $Z_i = Z_{i-1} + \Exp{\alpha}$ where $Z_0 := 0$.
\end{lemma}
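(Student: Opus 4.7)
The plan is to prove the identity $Y_i \overset{d}{=} Z_i$ by induction on $i$, with the memoryless property of the exponential distribution providing the key leverage. The base case $i=1$ is immediate: $Y_1 = \min_n X_n \sim \Exp{\alpha}$ by hypothesis, matching $Z_1 = Z_0 + \Exp{\alpha} = \Exp{\alpha}$.

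For the inductive step, I would condition on the value $Y_{i-1} = t$ together with the identities $n_1, \ldots, n_{i-1}$ of the indices attaining the first $i-1$ order statistics. By memorylessness, conditioned on this information, the shifted residual family $\{X_n - t : n \notin \{n_1, \ldots, n_{i-1}\}\}$ is distributed as an independent copy of the original family of $X_n$'s. In particular, its minimum---which is exactly $Y_i - Y_{i-1}$---is $\Exp{\alpha}$-distributed and independent of $Y_{i-1}$. Combining with the inductive hypothesis $Y_{i-1} \overset{d}{=} Z_{i-1}$ then yields $Y_i \overset{d}{=} Z_{i-1} + \Exp{\alpha} = Z_i$, which is what we want.

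The main subtlety I anticipate is the bookkeeping around conditioning on the random set of minimizing indices: one needs to verify that removing finitely many terms from the original family still leaves a collection whose minimum is $\Exp{\alpha}$-distributed. This should follow from exchangeability of the $X_n$ together with the fact that $(n_1, \ldots, n_{i-1})$ is determined by the observed values. A cleaner alternative, should the direct argument get fiddly, is to reinterpret the increasing sequence $Y_1 < Y_2 < \ldots$ as the arrival times of a rate-$\alpha$ Poisson point process on $[0, \infty)$ (which is consistent with the hypothesis $\min_n X_n \sim \Exp{\alpha}$); the lemma then reduces to the textbook fact that Poisson inter-arrival gaps are i.i.d.\ $\Exp{\alpha}$, sidestepping the need to track minimizer identities at all.
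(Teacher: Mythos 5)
Your proof is correct, and your instinct that the bookkeeping is the delicate part is well placed. The direct claim—that conditioning on the first $i-1$ minimizers and shifting leaves a family whose minimum is still $\Exp{\alpha}$—cannot follow from exchangeability alone: for any \emph{finite} family it is simply false, since removing terms strictly lowers the combined rate of the remaining minimum. It holds only in a limiting regime, which is precisely what the hypothesis ``$\min_n X_n \sim \Exp{\alpha}$'' over an infinite i.i.d.\ family is informally encoding (taken literally, the infimum of infinitely many i.i.d.\ exponentials is $0$ almost surely, so the hypothesis has no finite- or infinite-$n$ literal model). Your Poisson-point-process reformulation is therefore not merely a cleaner alternative but the right way to make the statement precise: read $\{X_n\}$ as the points of a rate-$\alpha$ Poisson process on $[0,\infty)$, so that $Y_1 < Y_2 < \cdots$ are its arrival times, the gaps $Y_i - Y_{i-1}$ are i.i.d.\ $\Exp{\alpha}$ by the strong Markov property, and $Y_i$ is identically distributed to $Z_i$ immediately. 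The paper defers its own proof to an appendix not included in the excerpt, so a line-by-line comparison is not possible, but this memorylessness/Poisson argument is the standard one and surely matches in substance.
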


\noindent\textbf{Remark.} Lemma~\ref{lemma:sampling} provides an useful tool to reduce the computational cost of sampling only the best credentials for our adversary. If one wants to sample the $k$ lowest scores among accounts in $A$, a naive approach would require us to take $|A|$ samples from $\Exp{\frac{\alpha}{|A|}}$, sort in increasing order and output the first $k$ credentials. However, from Lemma~\ref{lemma:sampling}, it suffices to sample and output the sequence $X_1 = \Exp{\alpha}, X_2 = X_1 + \Exp{\alpha}, \ldots X_k = X_{k-1} + \Exp{\alpha}$.

We now prove the probability that the adversary has $j$ options for the seed of round $r$ given $Q_{r-1}$ is $\alpha^j (1-\alpha)$.
\begin{lemma}\label{lemma:winners}
Let $X_1, X_2, \ldots$ be i.i.d. exponentially distributed random variables such that $\min_{n \in \mathbb N} X_n$ is exponentially distributed with rate $\alpha$. Let $W$ be exponentially distributed with rate $1-\alpha$. Let $S = \{i \in \mathbb N : X_i < W\}$. Then $\pr{|S| = j} = \alpha^j (1-\alpha)$.
\end{lemma}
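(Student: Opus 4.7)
The plan is to reduce to partial sums using Lemma~\ref{lemma:sampling} and then exploit memorylessness. First I would observe that $|S|$ counts how many of the $X_i$'s lie below $W$, and since the event $\{X_i < W\}$ only depends on the multiset $\{X_1, X_2, \ldots\}$, we have $|S| \geq j$ if and only if the $j$-th order statistic $Y_j := \min^{(j)}_{n \in \mathbb{N}} X_n$ satisfies $Y_j < W$. By Lemma~\ref{lemma:sampling}, $Y_j$ is distributed as $E_1 + E_2 + \cdots + E_j$ where the $E_i$ are i.i.d.\ $\Exp{\alpha}$, independent of $W$.

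Next I would compute $\pr{|S| \geq j}$ by induction on $j$ using memorylessness. The base case is standard: if $E \sim \Exp{\alpha}$ and $W \sim \Exp{1-\alpha}$ are independent, then $\pr{E < W} = \alpha/(\alpha + (1-\alpha)) = \alpha$. For the inductive step, condition on $Y_{j-1} < W$; by the memorylessness of the exponential, $W - Y_{j-1} \sim \Exp{1-\alpha}$ independently of $E_j$, so $\pr{Y_j < W \mid Y_{j-1} < W} = \pr{E_j < W - Y_{j-1} \mid Y_{j-1} < W} = \alpha$. Iterating gives $\pr{|S| \geq j} = \alpha^j$.

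Finally, $\pr{|S| = j} = \pr{|S| \geq j} - \pr{|S| \geq j+1} = \alpha^j - \alpha^{j+1} = \alpha^j(1-\alpha)$, as desired.

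The argument is short and the only minor obstacle is justifying the conditioning step cleanly — specifically, that after conditioning on $Y_{j-1} < W$ the residual $W - Y_{j-1}$ is still $\Exp{1-\alpha}$ and independent of the fresh increment $E_j$. This follows because $Y_{j-1}$ is a function of $E_1, \ldots, E_{j-1}$ (all independent of $W$ and $E_j$), so we are effectively applying the standard memorylessness identity $\pr{W > t + s \mid W > t} = \pr{W > s}$ with $t = Y_{j-1}$, conditionally on the value of $Y_{j-1}$, then averaging.
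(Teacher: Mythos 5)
Your proof is correct and takes essentially the same approach as the paper: both reduce $|S|$ to the order statistics via Lemma~\ref{lemma:sampling}, identify these with cumulative sums of i.i.d.\ $\Exp{\alpha}$ increments, and then apply the memorylessness of $W$ at each step. The only cosmetic difference is that you compute the tail $\pr{|S| \geq j} = \alpha^j$ by induction and difference at the end, whereas the paper directly factors $\pr{|S| = j}$ as the telescoping product $\pr{E_{j+1}^c \mid E_j}\prod_{i=1}^{j}\pr{E_i \mid E_{i-1}}$; these are the same computation written in two orders.
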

\begin{proof}
Let $Z_i = \min_{n \in \mathbb N}^{(i)} X_n$. Let $E_i$ denote the event $Z_i < W$ and let $E_i^c$ be its complement. Then $|S| = j$ if and only if $Z_1 < Z_2 < \ldots < Z_j < W < Z_{j+1}$. Then
\begin{align*}
    \pr{|S| = j} &= \pr{E_{j+1}^c \cap (\cap_{i = 1}^j E_i)}\\
    &= \pr{E_{j+1}^c | E_j} \prod_{i = 1}^j \pr{E_i | E_{i-1}} \\
    &= \pr{W < Z_{j+1} | W > Z_j} \prod_{i = 1}^j \pr{W > Z_i| W > Z_{i-1}}.
\end{align*}
From Lemma~\ref{lemma:sampling}, $Z_{i+1}$ is identically distributed to $Z_i + \Exp{\alpha}$ for all $i \in \mathbb N$. Then
\begin{align*}
    \pr{|S| = j} &= \pr{W < Z_j + \Exp{\alpha} | W > Z_j} \\
    &\qquad \times \prod_{i = 1}^j \pr{W > Z_{i-1} + \Exp{\alpha} | W > Z_{i-1}}\\
    &= \pr{W < \Exp{\alpha}} \prod_{i = 1}^j \pr{W > \Exp{\alpha}} \qquad & \text{From Lemma~\ref{lemma:memoryless},}\\
    &= (1-\alpha)\prod_{i = 1}^j \alpha = \alpha^j (1-\alpha) \quad & \text{From Lemma~\ref{lemma:exp-inequality}}
\end{align*}
\end{proof}

\begin{corollary}\label{cor:winners}
Let $Q_{r-1}$ de drawn from $U[0, 1]$ and $W(Q_{r-1}) = \{i \in A : \score(\cred_r^i, \alpha_i) < \min_{j \notin A} \score(\cred_r^j, \alpha_j)\}$. Then $\pr{|W(Q_{r-1})| = j} = \alpha^j (1-\alpha)$.
\end{corollary}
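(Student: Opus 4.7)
The plan is to reduce Corollary~\ref{cor:winners} to a direct application of Lemma~\ref{lemma:winners}. Since $Q_{r-1}$ is drawn from $U[0,1]$, each credential $\cred_r^i = f_{\sk_i}(Q_{r-1})$ is an independent uniform $[0,1]$ sample by the VRF property, and so, by the construction of $\score$ used throughout the paper, $\{\score(\cred_r^i, \alpha_i) : i \in [n]\}$ is a collection of independent random variables with $\score(\cred_r^i, \alpha_i) \sim \Exp{\alpha_i}$.

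Next, I would enumerate the adversary's scores as $\{X_i\}_{i \in A}$ and set $W = \min_{j \notin A} \score(\cred_r^j, \alpha_j)$. Standard properties of exponentials then give $W \sim \Exp{\sum_{j \notin A} \alpha_j} = \Exp{1-\alpha}$ and $\min_{i \in A} X_i \sim \Exp{\sum_{i \in A} \alpha_i} = \Exp{\alpha}$, with $W$ independent of $\{X_i\}_{i \in A}$ since the accounts in $A$ and $[n] \setminus A$ are disjoint. With these identifications, $W(Q_{r-1}) = \{i \in A : X_i < W\}$ is exactly the set $S$ in the statement of Lemma~\ref{lemma:winners}, and the conclusion $\pr{|W(Q_{r-1})| = j} = \alpha^j(1-\alpha)$ follows immediately.

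The only minor subtlety is that Lemma~\ref{lemma:winners} is phrased in terms of an infinite i.i.d. sequence whose min has rate $\alpha$, whereas here $A$ is finite and the adversary's per-account scores may have different rates. However, inspecting the proof of Lemma~\ref{lemma:winners} shows that it only uses the joint distribution of the order statistics $\min_n^{(i)} X_n$ through Lemma~\ref{lemma:sampling}, and the recurrence $Z_i \sim Z_{i-1} + \Exp{\alpha}$ holds equally well for any independent collection of exponentials whose rates sum to $\alpha$ (this is the standard ``thinning/superposition'' property of Poisson processes). Consequently the reduction goes through verbatim, and I do not foresee any real obstacle beyond verifying that the $\score$ conventions yield the claimed exponential marginals.
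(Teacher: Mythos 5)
Your proof is correct and follows essentially the same route as the paper: identify the honest minimum with $W \sim \Exp{1-\alpha}$, identify the adversarial scores with the $X_i$, and invoke Lemma~\ref{lemma:winners}. The one place you go beyond the paper's (two-line) proof is in flagging the mismatch between the hypotheses of Lemma~\ref{lemma:winners} (an infinite i.i.d.\ sequence whose minimum has rate $\alpha$) and the actual setting (a finite set $A$ with heterogeneous rates $\alpha_i$); your observation that the proof of Lemma~\ref{lemma:winners} only uses the gap structure $Z_i \sim Z_{i-1} + \Exp{\alpha}$ of the order statistics, which survives under superposition, is a genuine and worthwhile piece of rigor that the paper leaves implicit.
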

\begin{proof}
Recall $\min_{j \notin A} \score(\cred_r^j, \alpha_j)$ is identically distributed to $\Exp{1-\alpha}$ and $\min_{i \in A} \score(\cred_r^i, \alpha_i)$ is identically distributed to $\Exp{\alpha}$ (Lemma~\ref{lemma:min-exp}). From Lemma~\ref{lemma:winners}, $\pr{|W(Q_{r-1})| = j} = \alpha^j (1-\alpha)$ as desired.
\end{proof}

\subsection{Extinction in the Branching Process}
Next, we derive necessary conditions for the expected height of $\tree(Q_0)$ to be finite. This result will will imply the existence of optimal positive recurrent strategies.

\begin{lemma}\label{lemma:height-probability}
Let $Q_0$ be an unbiased seed and let $\tau$ be the first forced stopping time. Then $\pr{\tau \geq k} \leq \left(\frac{\alpha(2-\alpha)}{1-\alpha}\right)^k$.
\end{lemma}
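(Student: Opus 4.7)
The plan is to realize $\tree(Q_0)$ as a Galton--Watson branching process, compute its mean offspring $\mu := \alpha(2-\alpha)/(1-\alpha)$, and finish with Markov's inequality applied to the generation sizes. As the preceding discussion observes, even the omniscient adversary can postpone a forced stopping time for $k$ rounds only if there is a root-to-depth-$k$ path in $\tree(Q_0)$; hence, for any strategy, $\{\tau \geq k\} \subseteq \{Z_k \geq 1\}$ where $Z_k$ denotes the number of nodes at depth $k$. It therefore suffices to argue $\e{Z_k} \leq \mu^k$ and invoke Markov.

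First I would pin down the offspring distribution at a single node $Q$. By the tree's rule, $Q$ is a leaf when $|W(Q)| = 1$ and otherwise has $|W(Q)|$ children. Since $W(Q)$ here includes $j^*$ on top of the set considered in Corollary~\ref{cor:winners}, the shifted distribution is $\pr{|W(Q)| = k} = \alpha^{k-1}(1-\alpha)$ for $k \geq 1$, so the number of children $N$ satisfies $\pr{N = 0} = 1-\alpha$ and $\pr{N = k} = \alpha^{k-1}(1-\alpha)$ for $k \geq 2$. A direct computation gives
\[
\e{N} = \sum_{k \geq 2} k \cdot \alpha^{k-1}(1-\alpha) = (1-\alpha)\left(\frac{1}{(1-\alpha)^2} - 1\right) = \frac{\alpha(2-\alpha)}{1-\alpha} = \mu,
\]
exactly matching the base in the claimed bound.

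Next I would promote this to a Galton--Watson statement by arguing that offspring counts at distinct nodes of $\tree(Q_0)$ are conditionally independent given their parents. This uses the VRF semantics: each child $f_{\sk_i}(Q)$ is an output on an input $Q$ that has not been queried elsewhere in the tree-exploration process, so conditional on the parent generation's seeds (all distinct with probability one, since the VRF outputs on fresh inputs are continuously distributed), the children's seeds are independent uniform draws from $[0,1]$. Since $|W(Q)|$ is a measurable function of $Q$ together with independent fresh VRF samples for the next round, this carries over to offspring counts. The tower law then gives $\e{Z_k} = \mu^k$, and Markov finishes:
\[
\pr{\tau \geq k} \leq \pr{Z_k \geq 1} \leq \e{Z_k} = \mu^k.
\]

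I expect the main obstacle to be the conditional-independence step above: all seeds in $\tree(Q_0)$ ultimately derive from a shared VRF family, so one must argue carefully that they behave as fresh uniform samples for the branching-process calculation rather than as correlated pseudorandom values. This parallels the independence argument inside Lemma~\ref{lemma:stopping-time}, and once it is stated cleanly the remaining steps (the generating-function manipulation for $\mu$ and the Markov bound on $Z_k$) are routine.
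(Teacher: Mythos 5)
Your approach is essentially a repackaging of the paper's: both are first-moment arguments on the branching tree $\tree(Q_0)$. The paper conditions on the root's offspring count, applies a union bound over the children, and inducts on $A_k = \pr{E_{k,Q_0}}$; you compute $\e{Z_k}=\mu^k$ via the tower property and finish with Markov. These yield the same recursion $A_k \leq \mu A_{k-1}$, and your offspring-distribution bookkeeping matches Corollary~\ref{cor:winners} shifted by one to account for $j^*$, so the computation of $\mu=\alpha(2-\alpha)/(1-\alpha)$ is right. One simplification: for $\e{Z_k}=\mu^k$ you do not need conditional independence of offspring counts at distinct nodes, only that each node's offspring count has conditional mean $\mu$ given the preceding generations; this follows from the same fresh-seed observation the paper uses when invoking $\pr{E_{k-1,U[0,1]}}$ for each child, and it is also exactly what the paper's union bound exploits, which likewise never uses independence. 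Flagging full independence as the main obstacle is therefore more than you need.

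There is, however, an off-by-one you have inherited from the paper. Your key inclusion is $\{\tau \geq k\} \subseteq \{Z_k \geq 1\}$, but this fails already at $k=1$: $\pr{\tau \geq 1}=1$ while $Z_1 \geq 1$ only when $|W(Q_0)|\geq 2$. The correct inclusion is $\{\tau \geq k\} \subseteq \{Z_{k-1}\geq 1\}$: if none of rounds $1,\dots,k-1$ is a forced stopping time, the realized seeds $Q_0,\dots,Q_{k-1}$ trace a root-to-depth-$(k-1)$ path in $\tree(Q_0)$, but nothing forces depth $k$ to be occupied. This gives $\pr{\tau\geq k}\leq \mu^{k-1}$, equivalently $\pr{\tau > k}\leq \mu^k$, and in fact the lemma as stated is false at $k=1$ whenever $\mu<1$. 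The paper's own proof makes the identical slip (``$\tau\geq k$ implies the height of $\tree(Q_0)$ is at least $k+1$'' should read ``at least $k$''), and the downstream Theorem~\ref{thm:stopping-time} is unaffected since it uses $\e{\tau}=\sum_{i\geq 0}\pr{\tau>i}\leq\sum_{i\geq 0}\mu^i$, which is what the corrected bound delivers. You should state the corrected inclusion and corrected exponent rather than reproduce the paper's indexing.
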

\begin{proof}
Clearly $\tau$ is upper bounded by the height of $\tree(Q_0)$, then the event $\tau \geq k$ implies the height of $\tree(Q_0)$ is at least $k+1$. For all $k \geq 0$ and $Q \in [0, 1]$, let $E_{k, Q}$ denote the event that $\tree(Q)$ has height at least $k+1$. Note $\pr{E_{0, Q}} = 1$. Then, for $k \geq 1$, the event $E_{k, Q}$ holds if and only if $|W(Q)| \geq 2$ and for some child $Q' \in W(Q)$, the sub-tree $\tree(Q')$ has height at least $k-1$. Let $A_k = \pr{E_{k, Q_0}}$. Then,

\begin{align*}
    \pr{\tau \geq k} &\leq \pr{E_{k, Q_0}} = A_k\\
    &= \sum_{i = 1}^\infty \pr{|W(Q_0)| = i+1} \pr{\cup_{j \in W(Q_0)\cup \{j_0\}} E_{k-1, \cred_0^j}}\\
    &= (1-\alpha)\sum_{i = 1}^\infty \alpha^i \pr{\cup_{j \in W(Q_0) \cup \{j_0\} } E_{k-1, \cred_0^j} | \abs{W(Q_0)} = i+1} \quad & \text{From Lemma~\ref{lemma:winners},}\\
    &\leq (1-\alpha)\sum_{i = 1}^\infty (i+1)\alpha^i \pr{E_{k-1, U[0, 1]}}\quad & \text{From the union bound,}\\
    &= \frac{\alpha(2-\alpha)}{1-\alpha} A_{k-1}
\end{align*}
The last line observes the geometric series converges to $\frac{\alpha(2-\alpha)}{1-\alpha}$. To conclude, we proof by induction that $A_k \leq \left(\frac{\alpha(2-\alpha)}{1-\alpha}\right)^k$. The base case is clear: $A_0 \leq 1$. For $k \geq 1$, the inductive assumption gives
$$A_k = \frac{\alpha(2-\alpha)}{1-\alpha} A_{k-1} \leq \left(\frac{\alpha(2-\alpha)}{1-\alpha}\right)^k$$
as desired. This proves the statement.
\end{proof}

\begin{theorem}\label{thm:stopping-time}
Consider any strategy $\pi$ and let $Q_0$ be an unbiased seed. Let $\tau \geq 1$ the first forced stopping time. If $\alpha < \frac{3-\sqrt{5}}{2} \approx 0.38$, $\e{\tau} < \frac{1-\alpha}{1-3\alpha+\alpha^2} < \infty$. Hence $\pi$ is positive recurrent.
\end{theorem}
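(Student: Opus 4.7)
The plan is to derive the bound on $\e{\tau}$ directly from Lemma~\ref{lemma:height-probability}, using the standard tail-sum formula for the expectation of a nonnegative integer random variable. Concretely, I would write
\[
\e{\tau} = \sum_{k=1}^{\infty} \pr{\tau \geq k} \leq \sum_{k=1}^{\infty} \left(\frac{\alpha(2-\alpha)}{1-\alpha}\right)^{k}.
\]
Everything then reduces to a geometric series calculation, conditional on showing that the ratio $r := \frac{\alpha(2-\alpha)}{1-\alpha}$ is strictly less than $1$ under the hypothesis $\alpha < \frac{3-\sqrt{5}}{2}$.

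Next I would verify convergence. The inequality $r < 1$ is equivalent to $\alpha(2-\alpha) < 1-\alpha$, which rearranges to $\alpha^2 - 3\alpha + 1 > 0$. The roots of $\alpha^2 - 3\alpha + 1 = 0$ are $\frac{3 \pm \sqrt{5}}{2}$, so the quadratic is positive precisely for $\alpha < \frac{3-\sqrt{5}}{2}$ (within the relevant range). This is exactly where the threshold in the theorem comes from, and it is the conceptual crux of the argument even though the manipulation is elementary.

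Under this condition the geometric series sums to $\frac{r}{1-r} = \frac{\alpha(2-\alpha)}{1-\alpha - \alpha(2-\alpha)} = \frac{\alpha(2-\alpha)}{1-3\alpha+\alpha^2}$. Finally, to match the exact bound stated in the theorem, I would observe that the same inequality $\alpha(2-\alpha) < 1-\alpha$ established above lets me replace the numerator $\alpha(2-\alpha)$ by the larger quantity $1-\alpha$, yielding
\[
\e{\tau} \;\leq\; \frac{\alpha(2-\alpha)}{1-3\alpha+\alpha^2} \;<\; \frac{1-\alpha}{1-3\alpha+\alpha^2} \;<\; \infty.
\]
Positive recurrence of $\pi$ then follows immediately from the definition, since $\tau$ is a stopping time by Lemma~\ref{lemma:stopping-time} and its expectation is finite. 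There is no real obstacle here: all the heavy lifting has been done in Lemma~\ref{lemma:height-probability}, and the only insight needed is identifying $\frac{3-\sqrt{5}}{2}$ as the root of $\alpha^2 - 3\alpha + 1$.
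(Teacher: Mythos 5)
Your proof is correct and follows essentially the same approach as the paper: apply the tail-sum formula for expectation, substitute the geometric tail bound from Lemma~\ref{lemma:height-probability}, and sum the geometric series, with the threshold $\frac{3-\sqrt{5}}{2}$ emerging as the relevant root of $\alpha^2 - 3\alpha + 1$. The only cosmetic difference is that you start your sum at $k=1$ and therefore obtain the slightly sharper intermediate bound $\frac{\alpha(2-\alpha)}{1-3\alpha+\alpha^2}$ before relaxing to the stated $\frac{1-\alpha}{1-3\alpha+\alpha^2}$, whereas the paper sums $\sum_{i\geq 0} r^i$ (effectively using the weaker $\pr{\tau > i} \leq r^i$) and lands on the stated constant directly.
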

\begin{proof}
Recall that for positive discrete random $\tau$, $\e{\tau} = \sum_{i = 0}^\infty \pr{\tau > i}$. From Lemma~\ref{lemma:height-probability},
\begin{align*}
    \e{\tau} < \sum_{i = 0}^\infty \left(\frac{\alpha(2-\alpha)}{1-\alpha}\right)^i = \frac{1-\alpha}{1-3\alpha+\alpha^2} < \infty
\end{align*}
The last inequality observes the geometric series converges for $\alpha < \frac{3-\sqrt{5}}{2}$.
\end{proof}

As an application of Theorem~\ref{thm:stopping-time}, we derive a theoretical upper bound on the revenue for any strategy. Figure~\ref{fig:bound} compares the curve for the theoretical upper bound with the revenue of the honest strategy.

\begin{theorem}\label{thm:bound}
For $\alpha < \frac{3-\sqrt{5}}{2} \approx 0.38$, and all $\beta$, $\val(\alpha,\beta) \leq \frac{\alpha(2-\alpha)}{1-\alpha}$.
\end{theorem}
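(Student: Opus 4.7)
The plan is to combine Lemma~\ref{lemma:rev-recurrence} with Theorem~\ref{thm:stopping-time}, together with the simple observation that the adversary earns no reward at a forced stopping time. Fix an arbitrary strategy $\pi$. Since $\alpha < \frac{3-\sqrt{5}}{2}$, Theorem~\ref{thm:stopping-time} says that $\pi$ is positive recurrent and that the first forced stopping time $\tau$ (which is a stopping time by Lemma~\ref{lemma:stopping-time}) satisfies $\e{\tau} < \frac{1-\alpha}{1-3\alpha+\alpha^2}$. Applying Lemma~\ref{lemma:rev-recurrence} to this $\tau$ gives
$$\rev(\pi) = \frac{\e{\sum_{r=1}^{\tau} X_r(\pi)}}{\e{\tau}}.$$

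The key step is bounding the numerator. By the proof of Lemma~\ref{lemma:stopping-time}, at a forced stopping time the leader lies in $\{B,C\}$ because the minimum-scoring credential is owned by an honest account; hence $X_\tau(\pi) = 0$ on every sample path. Combined with $X_r(\pi) \in \{0,1\}$ for all $r$, this gives the pointwise inequality $\sum_{r=1}^{\tau} X_r(\pi) \leq \tau - 1$, so $\rev(\pi) \leq 1 - 1/\e{\tau}$.

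To conclude, note that $x \mapsto 1 - 1/x$ is increasing on $(0,\infty)$, so substituting the upper bound on $\e{\tau}$ from Theorem~\ref{thm:stopping-time} produces
$$\rev(\pi) < 1 - \frac{1 - 3\alpha + \alpha^2}{1 - \alpha} = \frac{\alpha(2-\alpha)}{1-\alpha}.$$
No step depended on $\beta$, so the bound holds for every $\beta$; taking a supremum over $\pi$ yields $\val(\alpha,\beta) \leq \frac{\alpha(2-\alpha)}{1-\alpha}$ as claimed.

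There is no real obstacle at this point: the branching-process extinction argument in Lemma~\ref{lemma:height-probability} and Theorem~\ref{thm:stopping-time} has already done the hard work. The only point worth double-checking is that $X_\tau(\pi) = 0$ is a pointwise, not merely in-expectation, statement, so that $\sum_{r=1}^\tau X_r(\pi) \leq \tau - 1$ passes through the expectation without a delicate conditioning on $\tau$; this is immediate from the definition of a forced stopping time and the leader identification in Lemma~\ref{lemma:stopping-time}.
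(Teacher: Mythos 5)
Your proof is correct and follows essentially the same route as the paper: apply Lemma~\ref{lemma:rev-recurrence} at the first forced stopping time $\tau$, observe that $X_\tau(\pi)=0$ pointwise so $\sum_{r=1}^\tau X_r(\pi)\leq\tau-1$, and then plug the bound on $\e{\tau}$ from Theorem~\ref{thm:stopping-time} into $1-1/\e{\tau}$. Your version is slightly cleaner in one respect---you bound an arbitrary $\pi$ and take a supremum rather than asserting the existence of an optimal positive recurrent strategy, and you explicitly flag that the $\sum X_r\leq\tau-1$ inequality is pointwise---but the underlying argument is the same as the paper's.
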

\begin{proof}
From Theorem~\ref{thm:stopping-time}, for $\alpha < \frac{3-\sqrt{5}}{2}$, there is an optimal positive recurrent strategy $\pi$. Let $\tau \geq 1$ be a forced stopping time. From Lemma~\ref{lemma:stopping-time}, $\tau$ is a stopping time and
\begin{align*}
    \rev(\pi) &= \frac{\e{\sum_{r = 1}^\tau X_r(\pi)}}{\e{\tau}} \quad & \text{Lemma~\ref{lemma:rev-recurrence},}\\
    &\leq \frac{\e{\tau - 1}}{\e{\tau}}\\
    &= 1 - \frac{1}{\e{\tau}} \quad & \text{From linearity of expectation,}\\
    &\leq \frac{\alpha(2 - \alpha)}{1-\alpha} \quad & \text{From Theorem~\ref{thm:stopping-time}}.
\end{align*}
The first inequality observes that if the adversary cannot choose $Q_\tau$, then the adversary does not create block $B_\tau$.
\end{proof}

\begin{figure}
\centering
\begin{minipage}{0.475\textwidth}
\includegraphics[width=\linewidth]{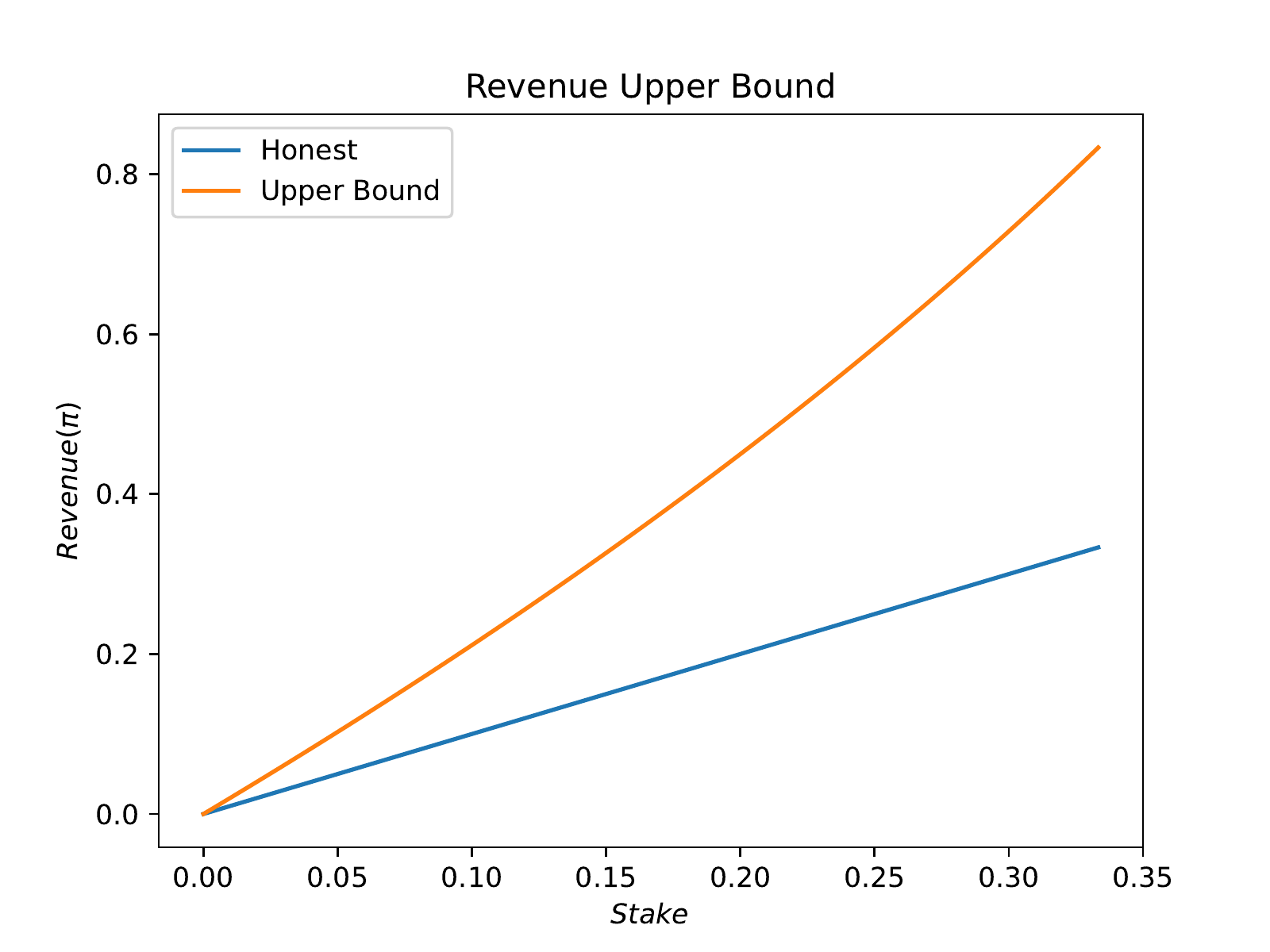}
\end{minipage}
\begin{minipage}{0.475\textwidth}
\includegraphics[width=\linewidth]{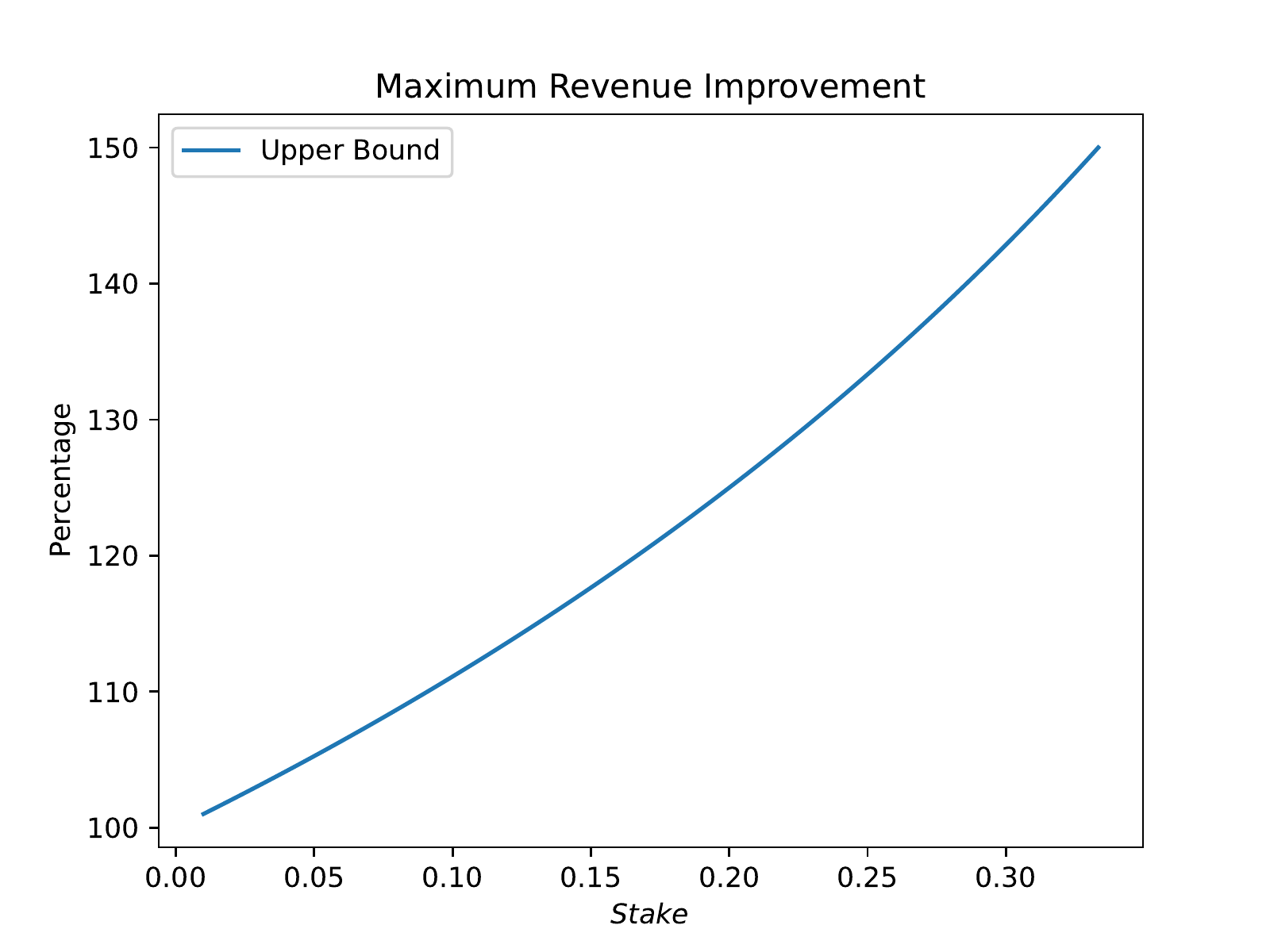}
\end{minipage}
\caption{Maximum revenue attained by any strategy. In the left, we plot the revenue for the honest strategy and our upper bound for the maximum revenue. In the right, we plot the maximum revenue improvement relative to the honest strategy.}
\label{fig:bound}
\end{figure}

The analysis in this section shows the following: for all $\alpha < \frac{3-\sqrt{5}}{2}$, even an omniscient adversary with an $\alpha$ fraction of the total stake can win at most an $\alpha \cdot \frac{2-\alpha}{1-\alpha}<1$ fraction of the rounds in expectation.

\section{The 1-Lookahead strategy}\label{sec:warmup}

This section defines the $\onelookahead$ strategy for a strong adversary ($\beta = 1$), which outperforms the honest strategy for any value of $\alpha$. Recall that the adversary divides their stake equally among an arbitrarily large number of accounts $A$. Note that this is a concrete strategy that can be used in CSSPA, and therefore its reward gives a lower bound on $\val(\alpha,1)$. 
\begin{definition}[$\onelookahead$ strategy] The strategy proceed as follows:
\begin{enumerate}
    \item \label{step:depth-1-1} Let $r$ be the current round. Let $W(Q_{r-1}) = \{i \in A: \score(\cred_r^i, \alpha_i) < \min_{i \notin A} \score(\cred_r^i, \alpha_i)\}$ be the collection of potential winners for the adversary.
    
    \item \label{step:depth-1-2} If $|W(Q_{r-1})| = 0$, broadcast no credentials. Terminate round $r$ and return to Step~\ref{step:depth-1-1}.
    
    \item \label{step:depth-1-3} If $|W(Q_{r-1})| \geq 1$, for each potential winner $i \in W(Q_{r-1})$, for each account $j \in A$, sample credential $\cred_{r+1}^{i, j} = f_{\sk_j}(\cred_r^i)$. Let $j(i) = \min_{j \in A} \cred_{r+1}^{i, j}$.
    
    \item Let $i^* = \arg \min_{i \in W(Q_{r-1})} \min_{j \in A} \score(\cred_{r+1}^{i, j}, \alpha_j)$.
    
    \item Broadcast $\cred_r^{i^*}$ at round $r$ and $\cred_{r+1}^{j(i^*)}$ at round $r + 1$.
    
    \item Return to Step~\ref{step:depth-1-1}.
\end{enumerate}
\end{definition}

\begin{theorem}\label{thm:rev-lookahead}
$\rev_{\alpha,1}(\onelookahead) = \frac{1-\alpha}{1+\alpha}\sum_{i = 1}^\infty \alpha^i\left(1 + \frac{i\alpha}{1+(i-1)\alpha}\right)$.
\end{theorem}
\begin{proof}

For our strategy, consider the stopping time $\tau$ where $\tau = 1$ when $|W(Q_0)| = 0$ and $\tau = 2$ when $|W(Q_0)| \geq 1$. Observe that this is indeed a stopping time as: (a) when $|W(Q_0)| = 0$, $\tau=1$ is a forced stopping time, and (b) while the adversary queries $f_{\sk_j}(Q_1)$ for multiple possible values of $Q_1$, \emph{they do not query $f_{\sk_j}(Q_2)$ for any of the possible values for $Q_2$}. Therefore, from the perspective of the adversary, the distribution of any VRF $Q_2$ is just $U([0,1])$, and this distribution is identical if we replace $Q_2$ with a fresh seed. Therefore, $\tau=2$ is indeed a stopping time when $W(Q_0) \geq 1$.

Now let's compute $E[\tau]$. The probability $|W(Q_0)| \geq 1$ is equal to the probability $\min_{i \in A}\{\score(\cred_r^i, \alpha_i)\} < \min_{j \notin A}\{\score(\cred_r^j, \alpha_j)\}$. The first term is exponentially distributed with rate $\alpha$ (Lemma~\ref{lemma:min-exp}) while the second is exponentially distributed with rate $1-\alpha$. Hence the probability the adversary has at least one winner is $\alpha$ (Lemma~\ref{lemma:exp-inequality}). Then $\e{\tau} = (1-\alpha) + 2\alpha  = 1 + \alpha$. Because the $\onelookahead$ strategy is positive recurrent Lemma~\ref{lemma:rev-recurrence} implies
$$\rev(\onelookahead) = \frac{\e{\sum_{r = 1}^\tau X_r(\pi)}}{\e{\tau} = 1 + \alpha}.$$
Let's compute the numerator. If $|W(Q_0)| \geq 1$, the adversary wins round $r$ since they always reveal a winning credential. Moreover, for round $r+1$, they reveal a credential with score $\min_{i \in W(Q_0), k \in A} \cred_{r+1}^{i, k}$ which is exponentially distributed with rate $\alpha \cdot |W(Q_0)|$ (Lemma~\ref{lemma:min-exp}). From Lemma~\ref{lemma:exp-inequality}, the probability the adversary wins round $r+1$ given $|W(Q_0)| = j$ is $\frac{\alpha j}{1 + \alpha (j - 1)}$. Hence,
$$\e{\sum_{r = 1}^\tau X_r(\pi) | |W(Q_0)| = j} = \ind{j \geq 1} + \frac{\alpha j}{1 + \alpha (j - 1)}.$$
From Corollary~\ref{cor:winners} the probability $|W(Q_0)| = j$ is $\alpha^j(1-\alpha)$. Then
\begin{align*}
\e{\sum_{r = 1}^\tau X_r(\pi)} &= \sum_{j = 1}^\infty \pr{W(Q_0) = j}\e{\sum_{r = 1}^\tau X_r(\pi) | W(Q_0) = j}\\
&= \sum_{j = 1}^\infty \alpha^j (1-\alpha) \left(1 + \frac{\alpha j}{1+ \alpha(j-1)}\right)
\end{align*}
as desired. This concludes the proof.
\end{proof}
Figure~\ref{fig:lookahead} shows the revenue of the $\onelookahead$ strategy (Theorem~\ref{thm:rev-lookahead}) against the revenue of the honest strategy. Observe that it is always more profitable than the honest strategy, as expected.
\begin{figure}
\centering
\begin{minipage}{0.475\textwidth}
\includegraphics[width=\linewidth]{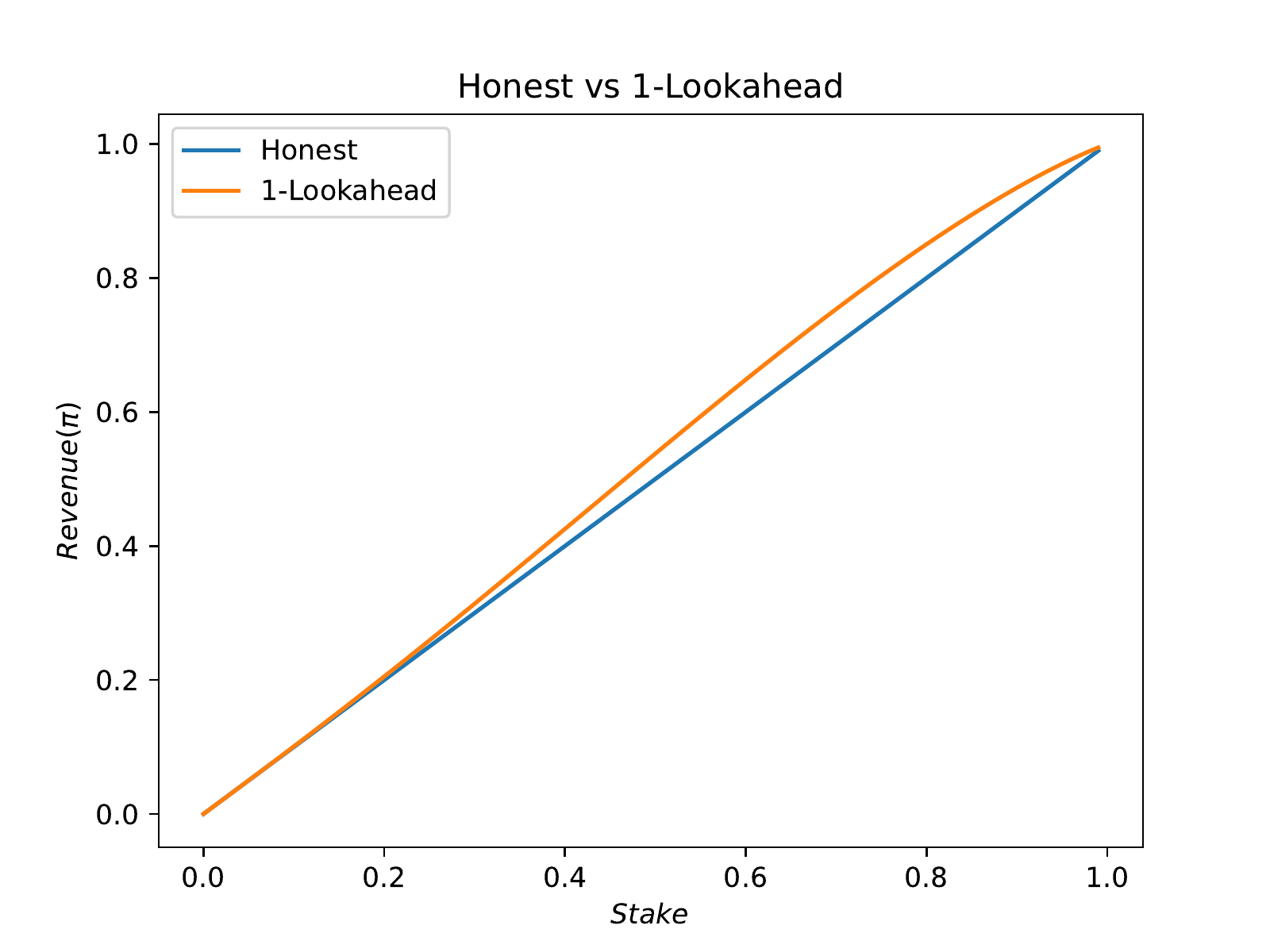}
\end{minipage}
\begin{minipage}{0.475\textwidth}
\includegraphics[width=\linewidth]{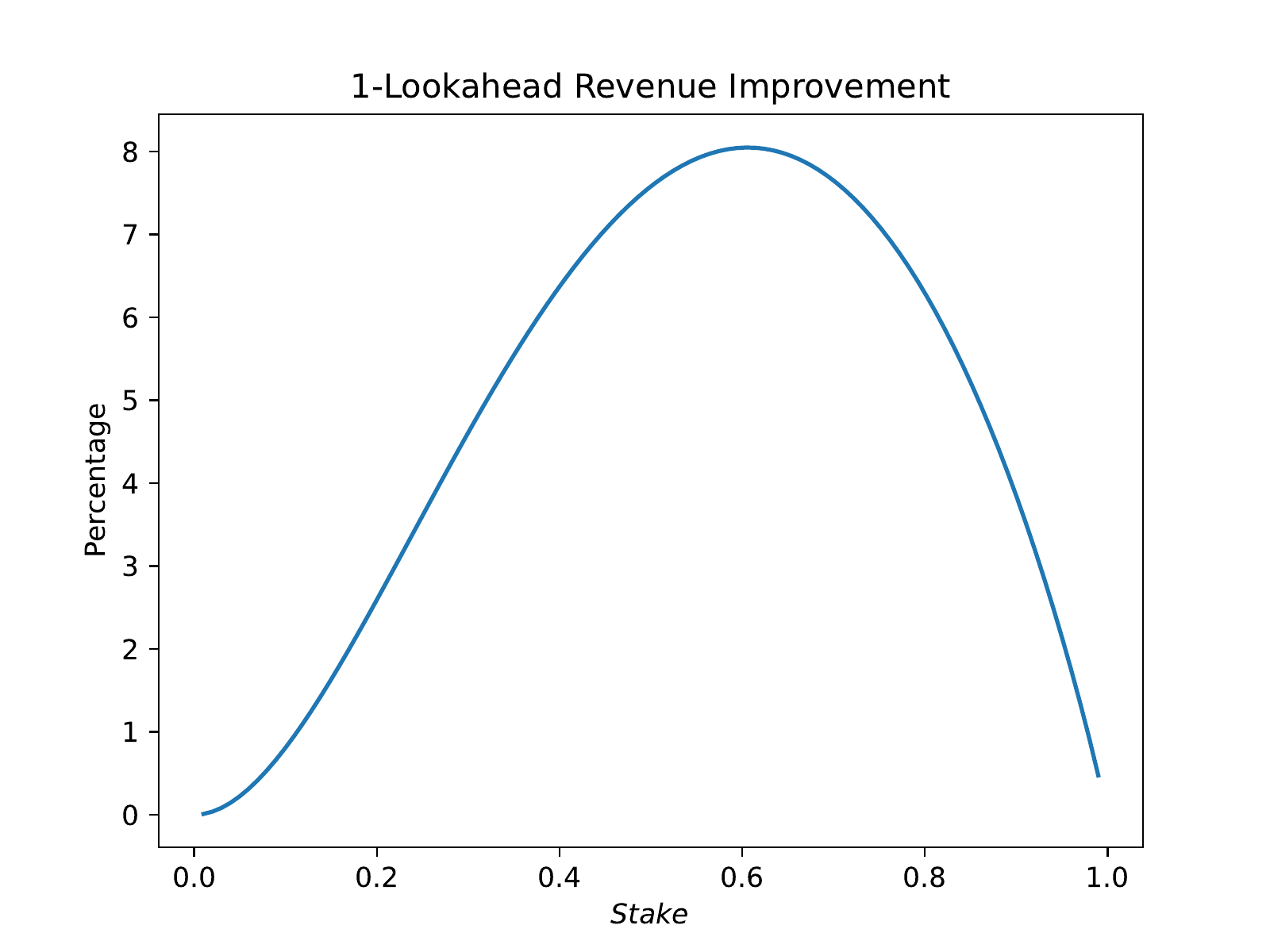}
\end{minipage}
\caption{Revenue for the $\onelookahead$ strategy. In the left, we have the absolute revenue of the honest and the $\onelookahead$ strategies. In the right, we plot the percentage revenue improvement from $\onelookahead$ relative to the honest strategy.}
\label{fig:lookahead}
\end{figure}

\section{Markov Decision Process for Optimal Strategies}\label{sec:opt}

This section shows how the optimal strategy can be computed by querying a Markov Decision Process (MDP) solver whenever $\alpha < \frac{3-\sqrt{5}}{2}\approx 0.38$. Let us recall the available information for the adversary before choosing an action. Once the game starts, the miner can compute all possible values for $Q_1, Q_2, \ldots, Q_k$ assuming $\ell_1 \in A \cup \{B\}$, $\ell_2, \ell_3 \ldots, \ell_{k-1} \in A$. As a special case, the $\onelookahead$ strategy only computes the possible values of $Q_2$ before choosing which credential to broadcast in the first round.

The information available for the adversary can be encoded in a tree where each node is a seed. For a seed $Q$, constant $k \geq 0$, define the $Q$ rooted tree $\tree_k(Q)$ recursively as follows:

\begin{itemize}
    \item If $k = 0$, let $\tree_k(Q)$ contain only the root $Q$.
    
    \item If $k \geq 1$, let $Q$ be the root of $\tree_k(Q)$. Moreover,
    
    \begin{itemize}
        \item For each $i \in A$, add edge $(Q, \tree_{k-1}(f_{\sk_i}(Q)))$ to $\tree_k(Q)$ where $\tree_{k-1}(f_{\sk_i}(Q))$ becomes a $f_{\sk_i}(Q)$ rooted sub-tree in $\tree_k(Q)$ connected by the edge $(Q, f_{\sk_i}(Q))$.
        
        \item For each $i \notin A$, once user $i$ already broadcast $f_{\sk_i}(Q)$, add edge $(Q, \tree_{k-1}(f_{\sk_i}(Q)))$ to $\tree_k(Q)$.
    \end{itemize}
\end{itemize}

Let $\tree(Q)$ be the graph obtained when we take $k \to \infty$ in $\tree_k(Q)$. Recall the basic facts for an optimal strategy $\pi$ from Section~\ref{sec:facts}: (1) $\pi$ divides its stake $\alpha$ among an infinite amount of wallets; (2) $\pi$ broadcast at most one credential each round. Then, without loss of generality, a strategy $\pi$ maps $\tree(Q)$ to at most one credential from $\{f_{\sk_i}(Q)\}_{i \in A}$, corresponding to the credential $\pi$ broadcast in a round with seed $Q$. If the strategy outputs no credential, we write $\pi(\tree(Q)) = \bot$.

\begin{definition}[Value Function]
Let $\pi$ be a positive recurrent strategy, and let $\rho$ be a positive constant. For a tree $\tree(Q)$, define
$$\vf_\pi^\rho(\tree(Q)) := \e{\sum_{r = 1}^\tau (X_r(\pi) - \rho) | Q_0 = Q}$$
where $\tau$ is stopping time. Taking the expected value with respect to $\tree(Q)$ gives
$$\vf_\pi^\rho := \e{\vf_\pi^\rho(\tree(Q))}.$$
\end{definition}
We can derive a recursive formula for the value function as follows:
\begin{proposition}\label{prop:vf-recursion} For any positive recurrent strategy $\pi$, positive constant $\rho$, tree $\tree(Q)$,

$$\vf_\pi^{\rho}(\tree(Q)) = \e{(X_1(\pi) - \rho) + \vf_\pi^\rho(Q_1) | Q_0 = Q}$$
\end{proposition}
\begin{theorem}\label{thm:optimization}
Let $\pi$ and $\pi'$ be positive recurrent strategies. Then
\begin{itemize}
    \item $\vf_\pi^\rho = 0$ if and only if $\rho = \rev(\pi)$.
    \item $\rev(\pi') < \rev(\pi)$ if and only if $\vf_\pi^{\rev(\pi')} > \vf_\pi^{\rev(\pi)}$.
\end{itemize}
\end{theorem}
\begin{proof}
From Lemma~\ref{lemma:rev-recurrence} and the assumption $\pi$ is positive recurrent, $\rev(\pi) = \frac{\e{\sum_{r = 1}^\tau X_r(\pi)}}{\e{\tau}}$. Clearly $1 - \rev(\pi) = \frac{\e{\sum_{r = 1}^\tau (1 - X_r(\pi))}}{\e{\tau}}$. Then
\begin{align*}
0 &= \frac{\e{\sum_{r = 1}^\tau X_r(\pi)} \e{\sum_{r = 1}^\tau (1 - X_r(\pi))}}{\e{\tau}} - \frac{\e{\sum_{r = 1}^\tau X_r(\pi)} \e{\sum_{r = 1}^\tau (1 - X_r(\pi))}}{\e{\tau}}\\
&= (1-\rev(\pi)) \e{\sum_{r = 1}^\tau X_r(\pi)} - \rev(\pi) \e{\sum_{r = 1}^\tau (1 - X_r(\pi))}\\
&= \e{\sum_{r = 1}^\tau (X_r(\pi) - \rev(\pi))} \qquad \text{From linearity of expectation,}\\
&= \vf_\pi^{\rev(\pi)}
\end{align*}
The chain of inequalities proofs $\vf_\pi^\rho = 0$ when $\rho = \rev(\pi)$ as desired. For the other direction, observe $\vf_\pi^\rho$ is a strictly decreasing function of $\rho$. Hence there is a unique value for $\rho$ where $\vf_\pi^{\rho}$ vanishes to zero. This proves the first bullet. The second bullet follows from the fact $V_\pi^\rho$ is strictly monotone decreasing in $\rho$.
\end{proof}  
\begin{corollary}\label{cor:opt}
Let $\pi^* \in \arg\max_{\tilde \pi} \rev(\tilde \pi)$. Then $\pi \in \arg\max_{\tilde \pi} \rev(\tilde \pi)$ is optimal if and only if $\pi \in \arg\max_{\tilde \pi} \vf_{\tilde \pi}^{\rev(\pi^*)}$.
\end{corollary}
\begin{proof}
First we prove that if $\pi \in \arg\max_\pi \rev(\pi)$, then $\pi \in \arg\max_\pi \vf_\pi^{\rev(\pi^*)}$. From Theorem~\ref{thm:optimization}, for all strategy $\tilde \pi$,
$$\vf_\pi^{\rev(\pi)} = 0 = \vf_{\tilde \pi}^{\rev(\tilde \pi)} \geq \vf_{\tilde \pi}^{\rev(\pi)}$$
where the first and second equality are the first bullet in the theorem; the inequality is the second bullet and the fact $\rev(\pi) = \rev(\pi^*) \geq \rev(\tilde \pi)$. Since the inequality holds for any $\tilde \pi$, we have $\pi \in \arg\max_{\tilde \pi} \vf_{\tilde \pi}^{\rev(\pi^*)}$. This proves the first part.

For the second part, we proof that if $\pi \in \arg\max_{\tilde \pi} \vf_{\tilde \pi}^{\rev(\pi^*)}$, then $\pi \in \arg\max_{\tilde \pi} \rev(\tilde \pi)$. We already proved that $\vf_{\pi^*}^{\rev(\pi*)} \geq \vf_\pi^{\rev(\pi^*)}$. The assumption implies $\vf_{\pi}^{\rev(\pi^*)} \geq \vf_{\pi^*}^{\rev(\pi^*)}$. Hence $\vf_{\pi}^{\rev(\pi^*)} = \vf_{\pi^*}^{\rev(\pi^*)} = 0$ which proves $\pi$ is optimal (Theorem~\ref{thm:optimization}).
\end{proof}

The following is equivalent to Bellman's principle of optimality.

\begin{lemma}
Let $\pi \in \arg\max_{\tilde \pi} \rev(\tilde \pi)$ and assume $\alpha < \frac{3-\sqrt{5}}{2}\approx 0.38$. Then for all $\tree(Q)$:
$$\pi \in \arg\max_{\tilde \pi} \vf_{\tilde \pi}^{\rev(\pi)}(\tree(Q)).$$
\end{lemma}
\begin{proof}
Let $\pi_r$ refer to the action of strategy $\pi$ at round $r$. From Corollary~\ref{cor:opt}, the fact $\pi$ is optimal implies
\begin{align*}
\vf_\pi^{\rev(\pi)} &= \max_{\tilde \pi} \vf_{\tilde \pi}^{\rev(\pi)}\\
&= \e{\max_{\tilde \pi} \vf_{\tilde \pi}^{\rev(\pi)}(\tree(Q_0))}\\
&= \e{\max_{\tilde \pi} \e{(X_1(\tilde \pi) - \rev(\pi)) + \vf_{\tilde \pi}^{\rev(\pi)}(\tree(Q_1))| Q_0}}\\
&= \e{\max_{\tilde \pi_1} \e{(X_1(\tilde \pi) - \rev(\pi)) + \max_{\tilde \pi : \tilde \pi_1} \vf_{\tilde \pi}^{\rev(\pi)}(\tree(Q_1))| Q_0}} 
\end{align*}
The second equality is Proposition~\ref{prop:vf-recursion}. The third equality observes that the optimal strategy for the sub-game starting with seed $Q_1$ is independent of the action taken at round $1$. Hence $\vf_{\pi}^{\rev(\pi)}(\tree(Q)) = \max_{\tilde \pi} \vf_{\tilde \pi}^{\rev(\pi)}(\tree(Q))$ for all $\tree(Q)$.
\end{proof}

To compute the optimal strategy $\pi^* \in \arg\max_{\tilde \pi} \rev(\tilde \pi)$, we can use a similar binary search algorithm from \citet{sapirshtein2016optimal}. We pick some $\rho \in [0, 1]$ as our guess for $\rev(\pi^*)$ and maximize the Markov Decision Process $\max_{\tilde \pi} \vf_{\tilde \pi}^\rho$. Let $\pi$ be the strategy the solver outputs. Then one of the following cases tell us if $\rho$ is a lower bound or an upper bound on the optimal revenue:

\begin{itemize}
    \item The case where $\vf_{\pi}^\rho \geq 0$ witnesses that $\rev(\pi^*) \geq \rho$. To see, recall $\vf_{\pi}^{\rev(\pi)} = 0$ (Theorem~\ref{thm:optimization}). Because $\vf_{\pi}^\rho$ is a strictly decreasing function of $\rho$, we conclude $\rho \leq \rev(\pi) \leq \rev(\pi^*)$ as desired.
    
    \item Te case where $\vf_{\pi}^\rho < 0$ witnesses that $\rev(\pi^*) < \rho$. To see, it suffices to prove the contra-positive: if $\rev(\pi^*) \geq \rho$, then $\vf_\pi^\rho \geq 0$. Assume $\rev(\pi^*) \geq \rho$. Because $\vf_{\pi}^\rho$ is a strictly decreasing function of $\rho$, we conclude $\vf_{\pi^*}^\rho \geq \vf_{\pi^*}^{\rev(\pi*)} = 0$ (Theorem~\ref{thm:optimization}). By assumption, $\vf_{\pi}^\rho = \max_{\tilde \pi} \vf_{\tilde \pi}^\rho \geq \vf_{\pi^*}^\rho \geq 0$ as desired.
\end{itemize}

\section{Conclusion}

We propose a stylized model to study optimal strategic mining in Cryptographic Self-Selection leader election protocols. We consider rational miners that wish to maximize the fraction of blocks they create. The same adversary has been studied in the context of Proof-of-Work blockchains since the discovery of the selfish mining attacks against Bitcoin~\cite{eyal2014majority}. 

Prior work largely classifies existing protocols into two camps: those where sufficiently small miners \emph{cannot} profitably deviate (longest-chain proof-of-work protocols with block reward and longest-chain proof-of-stake protocols with a randomness beacon), and those where arbitrarily small miners can still profitably deviate (longest-chain proof-of-work protocols with transaction fees, longest-chain proof-of-stake protocols without a randomness beacon). Our work classifies blockchains based on cryptographic self-selection with the latter group: we give a closed-form representation for a strategy that outperforms the honest strategy for any amount of stake. 

The key open question left by our work is to nail down the optimal fraction of rounds that a $\beta$-strong strategic miner with an $\alpha$ fraction of the stake can earn. While our work states that this quantity can in principle be determined by performing binary search over infinitely-sized MDPs, actually significant innovation seems to be required to actually perform this search, or even to approximating it computationally-efficiently.

\bibliography{mybib}

\newpage
\appendix

\section{Probability Theory Background}\label{app:exp}
\begin{lemma}\label{lemma:min-exp}
Let $X_1, X_2, \ldots, X_n$ be independent random variables where $X_i$ is a copy from $\Exp{\alpha_i}$ where $\alpha_i$ is a positive constant. Then $\min_{i \in [n]} \{X_i\}$ is identically distributed to $\Exp{\sum_{i = 1}^n \alpha_i}$.
\end{lemma}
\begin{proof}
The proof follows from computing the probability $\pr{\min X_i \leq x}$:
\begin{align*}
\pr{\min X_i \leq x} &= 1 - \pr{\land_{i = 1}^n \{X_i > x\}} = 1 - \prod_{i = 1}^n \pr{X_i > x}\\
&= 1 - \prod_{i = 1}^n e^{-\alpha_i x}\\
&= 1 - e^{-x \sum_{i = 1}^n \alpha_i}.
\end{align*}
The last line witness $\min X_i$ is exponentially distributed with weight $\sum_{i = 1}^n \alpha_i$.
\end{proof}
\begin{lemma}\label{lemma:exp-inequality}
Let $X$ and $Y$ be drawn independently from an exponential distributions with rate $\alpha_X$ and $\alpha_Y$ respectively. Then
$$\pr{X < Y} = \frac{\alpha_X}{\alpha_X + \alpha_Y}.$$
\end{lemma}
\begin{proof}
We have as follows:
\begin{align*}
    \pr{X < Y} &= \int_0^\infty f_Y(y)\pr{X < y} dy \\
    &= \int_0^\infty \alpha_Y e^{-\alpha_Y y} (1 - e^{-\alpha_X y}) dy \\
    &= \int_0^\infty \alpha_Y e^{-\alpha_Y y} dy - \int_0^\infty \alpha_Y e^{-(\alpha_Y  + \alpha_X)y} dy \\
    &= 1 - \frac{\alpha_Y}{\alpha_X + \alpha_Y} \\
    &= \frac{\alpha_X}{\alpha_X + \alpha_Y}
\end{align*}
\end{proof}

\begin{corollary}\label{cor:minExps}
Let $X_1,\ldots, X_n$ be drawn independently from exponential distributions with rates $\alpha_1,\ldots, \alpha_n$, respectively. Then $\Pr[X_i = \min_{j\in [n]}\{X_j\}] = \frac{\alpha_i}{\sum_{j=1}^n \alpha_j}$.
\end{corollary}
\begin{proof}
We prove this by induction, using Lemmas~\ref{lemma:min-exp} and~\ref{lemma:exp-inequality}. As a base case, the claim is clearly true when $n=1$, for all $\alpha_1$. Now as an inductive hypothesis, assume that the claim is true for some $n$, and all $\alpha_1,\ldots,\alpha_n$. We now consider the case of $n+1$ and any $\alpha_1,\ldots, \alpha_{n+1}$.

By Lemma~\ref{lemma:min-exp}, $\min_{j\in [n+1]\setminus i} \{X_j\}$ is distributed according to an exponential of rate $\sum_{j\neq i} \alpha_j$. By Lemma~\ref{lemma:exp-inequality}, the probability that $X_{i} = \min_{j \in [n+1]} \{X_j\} = \frac{\alpha_i}{\sum_{j=1}^{n+1} \alpha_j}$, as desired. This argument holds for any $i$, and completes the inductive step.
\end{proof}

\begin{lemma}[Memorylessness Property]\label{lemma:memoryless}
Let $X$ be drawn from an exponential distribution (with any rate $\alpha$), then for any $n, m \geq 0$,
$$\pr{X > n + m | X \geq m} = \pr{X > n}.$$
\end{lemma}
\begin{proof}
We have as follows:
\begin{align*}
    \pr{X > n + m | X \geq m} &= \frac{\pr{X > n + m, X \geq m}}{\pr{X \geq m}} \\
    &= \frac{\pr{X > n + m}}{\pr{X \geq m}} \\
    &= \frac{1 - (1 - e^{-\alpha (n + m)})}{1 - (1 - e^{-\alpha m})} \\
    &= \frac{e^{-\alpha (n + m)}}{e^{-\alpha m}} \\
    &= e^{-\alpha n} \\
    &= \pr{X > n}
\end{align*}
\end{proof}

\begin{lemma}[Strong Law of Large Numbers]\label{lemma:slln}
Let $X$ be a random variable. Let $X_1, X_2, \ldots, X_n$ be independent copies of $X$. Then $\pr{\lim_{n \to \infty} \frac{1}{n} \sum_{i = 1}^n X_i = \e{X}} = 1$.
\end{lemma}

\section{Omitted Proofs from Section~\ref{sec:background}}\label{app:background}

\begin{lemma}\label{lem:scoringg}
Let $g(\cdot)$ be a monotone increasing function with domain $(0,1)$. Define $S^g(x,\alpha):=g(x^{1/\alpha})$.
Then, $S^g(\cdot, \cdot)$ is a canonical balanced scoring function. 
\end{lemma}
\begin{proof}
First, we show that $S^g(\cdot,\cdot)$ is a balanced scoring function. To see this, we observe that:
\begin{align*}
    \Pr_{X_1,\ldots, X_n \leftarrow U([0,1]^n)}\left[\arg\min_{i\in [n]}\left\{S^g(X_i,\alpha_i)\right\} = j\right] &= \Pr_{X_1,\ldots, X_n \leftarrow U([0,1]^n)}\left[\arg\min_{i \in [n]}\{g(X_i^{1/\alpha_i})\} = j\right]\\
    &=\Pr_{X_1,\ldots, X_n \leftarrow U([0,1]^n)}\left[\arg\max_{i \in [n]}\{X_i^{1/\alpha_i}\} = j\right]\\
    &=\Pr_{X_1,\ldots, X_n \leftarrow U([0,1]^n)}\left[\arg\max_{i \in [n]}\{\ln(X_i^{1/\alpha_i})\} = j\right]\\
    &=\Pr_{X_1,\ldots, X_n \leftarrow U([0,1]^n)}\left[\arg\min_{i \in [n]}\{-\ln(X_i^{1/\alpha_i})\} = j\right]\\
    &=\frac{\alpha_j}{\sum_{i=1}^n \alpha_i}\\
\end{align*}

Above, the first line follows by definition of $S^g(\cdot, \cdot)$. The second line follows as $g(\cdot)$ is monotone decreasing. The third follows as $\ln(\cdot)$ is a monotone increasing function. The fourth line follows trivially. The final line follows as $-\ln(X_i^{1/\alpha_i})$ is distributed according to an exponential with rate $\alpha_i$ by Lemma~\ref{lem:exponential}, and Corollary~\ref{cor:minExps} (which states that the minimum of $Y_1,\ldots, Y_n$, when each $Y_i$ is drawn independently from $\Exp{\alpha_i}$ is equal to $X_i$ with probability $\alpha_i$, for all $i$)

To see that $S^g(\cdot, \cdot)$ is canonical, we first observe that because $g(\cdot)$ is monotone decreasing, $S^g(\cdot,\alpha)$ is monotone decreasing for all $\alpha$. To see the second bullet, we simply observe the following facts for any $x$:
\begin{align*}
\Pr_{X \leftarrow U([0,1])}[S^g(X,\sum_{i=1}^n \alpha_i) > g(x)] &= \Pr_{X \leftarrow U([0,1])}[g(X^{1/\sum_{i=1}^n \alpha_i}) > g(x)]\\
&= \Pr_{X \leftarrow U([0,1])}[X^{1/\sum_{i=1}^n \alpha_i} < x]\\
&= \Pr_{X \leftarrow U([0,1])}[X < x^{\sum_{i=1}^n \alpha_i}]\\
&= x^{\sum_{i=1}^n \alpha_i}.
\end{align*}
The first line follows by definition of $S^g(\cdot,\cdot)$. The second follows as $g$ is monotone decreasing. The third follows as both $X, x > 0$, and the final line follows as $X$ is drawn uniformly from $[0,1]$. For similar reasons, we have:

\begin{align*}
\Pr_{\vec{X} \leftarrow U([0,1])^n}[\min_{i=1}^n \{S^g(X_i, \alpha_i)\} > g(x)] &= \Pr_{\vec{X} \leftarrow U([0,1])^n}[\min_{i=1}^n \{g(X_i^{1/\alpha_i})\} > g(x)]\\
&= \Pr_{\vec{X} \leftarrow U([0,1])^n}[\max_{i=1}^n \{X_i^{1/\alpha_i}\} < x]\\
&= \Pr_{\vec{X} \leftarrow U([0,1])^n}[X_i^{1/\alpha_i} <  x,\ \forall i]\\
&=\Pr_{\vec{X} \leftarrow U([0,1])^n}[X_i <  x^\alpha_i,\ \forall i]\\
&= x^{\sum_{i=1}^n \alpha_i}.
\end{align*}

Therefore, we see that for all $n$, and all $\langle \alpha_1,\ldots, \alpha_n\rangle$, the distributions of $S^g(X,\sum_{i=1}^n \alpha_i)$ and $\min_{i=1}^n\{S^g(X_i,\alpha_i)\}$ are identical. 

\end{proof}

For example, the canonical scoring rule we use for our analysis is $S^g(\cdot,\cdot)$ where $g(x):=-\ln(x)$. The canonical scoring rule used in~\cite{chen2019algorand} is $S^h(\cdot,\cdot)$ where $h(x):=1-x$ (where $\alpha = 1$ denotes that the account owns a single coin).

\begin{lemma}\label{lem:scoringuseful}
Let $S$ be a canonical balanced scoring function, and define $g(\cdot):=S(\cdot, 1)$. Then $S=S^g$.
\end{lemma}
\begin{proof}
The proof follows from three simple steps: (a) we show that $S(\cdot, 1/n) = S^g(\cdot,1/n)$ for all integers $n$. Then, we use this to show that $S(\cdot,c/n) = S^g(\cdot, c/n)$, for all integers $c$. This concludes the proof for all rational $\alpha$, which is all we consider.

We now execute the first step. Observe that, because $S$ is canonical, we know exactly what the distribution of $S(X,1/n)$ must be when $X\leftarrow U([0,1])$. Indeed, we must have, for all $x$ (for notational convenience below defining an inverse, we let $S_{1/n}(x):=S(x,1/n)$):
\begin{align*}
\Pr_{\vec{X} \leftarrow U([0,1])^n}[\min_{i=1}^n \{S(X_i,1/n)\} > x] &= \Pr_{X \leftarrow U([0,1])}[S(X,1) > x]\\
\Rightarrow \Pr_{X \leftarrow U([0,1])}[S(X,1/n)> x]^n &= \Pr_{X \leftarrow U([0,1])}[g(X) > x]\\
\Rightarrow \Pr_{X \leftarrow U([0,1])}[S(X,1/n) > x] &= \Pr_{X \leftarrow U([0,1])}[g(X) > x]^{1/n}\\
&= \Pr_{X \leftarrow U([0,1])}[X < g^{-1}(x)]^{1/n}\\
&= (g^{-1}(x))^{1/n}\\
\Rightarrow (g^{-1}(x))^{1/n} &=\Pr_{X \leftarrow U([0,1])}[S(X,1/n)> x]\\
&= \Pr_{X \leftarrow U([0,1])}[X < S_{1/n}^{-1}(x)]\\
&= S^{-1}_{1/n}(x)\\
\Rightarrow S_{1/n}(x) &= g(x^n),\ \forall\ x.
\end{align*}

We now execute the second step, which has nearly identical calculations. 

\begin{align*}
\Pr_{X \leftarrow U([0,1])}[S(X,c/n) > x]&=\Pr_{\vec{X} \leftarrow U([0,1])^c}[\min_{i=1}^c \{S(X_i,1/n)\} > x] \\
&= \Pr_{X \leftarrow U([0,1])}[S(X,1/n) > x]^c\\
&= \Pr_{X \leftarrow U([0,1])}[g(X^{n}) > x]^c\\
&=  \Pr_{X \leftarrow U([0,1])}[X< g^{-1}(x)^{1/n}]^c\\
&= g^{-1}(x)^{c/n}\\
\Rightarrow g^{-1}(x)^{c/n} &= \Pr_{X \leftarrow U([0,1])}[S(X,c) > x]\\
&= \Pr_{X \leftarrow U([0,1])}[X < S_{c/n}^{-1}(x)]\\
&= S_{c/n}^{-1}(x)\\
\Rightarrow S_{c/n}(x) &= g(x^{n/c}),\ \forall\ x.
\end{align*}
\end{proof}

\begin{proof}[Proof of Proposition~\ref{prop:scoring}]
By Lemma~\ref{lem:scoringuseful}, we know that both $S$ and $S'$ are of the form $S^g$ and $S^h$ for some monotone decreasing functions $g, h$. We will use this property to couple outcomes of the two games.

First, we need to define the bijective mapping for each player. The mapping we will use is simple: in each game, split your stake exactly the same way. When choosing which credentials to broadcast, observe that in both games player $i$ has some information available to them (they see the credentials of all accounts they control, plus some other credentials of other players). Then given a strategy for the first game, we can define a strategy for the second game: during round $r$, broadcast the credential of player $i$ if and only if player $i$ broadcasts its credential in the first game.

Next, we need to couple the two games and claim that under this coupling, for all $r$, the leader in both games is the same. We will couple the games so that $\cred_r^j$ is the same for all rounds $r$ and accounts $j$. 

Now, observe that because we have mapped strategies of every player to one which distributes their stake identically among accounts, and because we have coupled the games so that $\cred_r^j$ is the same for all rounds $j$ and accounts $j$, that for all players $i$, the information available to player $i$ is identical in each game. Therefore, player $i$ will choose to broadcast exactly the same set of credentials. The only remaining step is to confirm that the same leader will be selected in each round because both $S$ and $S'$ are canonical.

Indeed, observe that among the set $B_r$ of broadcast credentials, the winner in the CSSPA with $S$ is exactly:
\begin{align*}
    \arg\min_{j \in B_r} \{S(\cred_r^j,\alpha_j)\} &= \arg\min_{j \in B_r}\{ g( ( \cred_r^j)^{1/\alpha_j} )\}\\
    &=\arg\max_{j \in B_r}\{( \cred_r^j)^{1/\alpha_j} )\}
\end{align*}

The first line follows as $S= S^g$, and the second line follows as $g(\cdot)$ is monotone decreasing. By exactly the same reasoning, we have that the winner in CSSPA with $S'$ is:

\begin{align*}
    \arg\min_{j \in B_r} \{S'(\cred_r^j,\alpha_j)\} &= \arg\min_{j \in B_r}\{ h( ( \cred_r^j)^{1/\alpha_j} )\}\\
    &=\arg\max_{j \in B_r}\{( \cred_r^j)^{1/\alpha_j} )\}
\end{align*}

Therefore, we have shown a mapping between strategies, and a coupling between outcomes, such that in each round the leader in both games is the same. This completes the proof.
\end{proof}

\section{Omitted Proofs from Section~\ref{sec:recurrence}}\label{app:recurrence}

\begin{proof}[Proof of Lemma~\ref{lemma:sampling}]
Consider instead the finite stochastic process $X_1, \ldots, X_n$ where each $X_i$ is an i.i.d. copy from $\Exp{\frac{\alpha}{n}}$. For all $i \in [n]$, define the random variables
$$Y_i = \min^{(i)}(\{X_1, \ldots, X_n\}), \qquad Z_i = Z_{i-1} + \Exp{\alpha-\frac{(i-1)\alpha}{n}},$$
and let $Y_0 = Z_0 = 0$. 
\begin{claim}\label{claim:sampling-finite}
For all $i \in [n]$, $Y_i$ is identically distributed to $Z_i$.
\end{claim}
\begin{proof}
The proof is by induction on $i \geq 0$. Assume for $i \geq 0$, $Z_i$ is identically distributed to $Y_i$ and observe the base case ($i = 0$) follows by definition. Then, it suffices to show that for any absolute constant $x$,
$$\pr{Y_{i+1} > x} = \pr{Z_{i+1} > x} = \pr{Z_i + \Exp{\alpha-\frac{i\alpha}{n}} > x}.$$
Fix $Y_i = z$ where $z$ is any absolute constant. For the case $x < z$, the fact $Y_{i+1} \geq Y_i$ implies
$$\pr{Y_{i+1} > x | x < z = Y_i} = 1 = \pr{Y_i + \Exp{\alpha-\frac{i\alpha}{n}} > x | x < z = Y_i}.$$
For the case $x \geq z$, let $A = \{j \in [n] | X_j > Y_i\}$ and observe that with probability 1, $|A| = n-i$. Fix $A = S$ for any set $S \subseteq [n]$. Then
\begin{align*}
    \pr{Y_{i+1} > x | x \geq z = Y_i, A = S} &= \prod_{j \in S} \pr{X_j > x | x \geq z = Y_i, A = S}\\
    &= \prod_{j \in S} \pr{X_j > Y_i + (x - Y_i) | x \geq z = Y_i, A = S}\\
    &= \prod_{j \in S} \pr{X_j > x - Y_i | x \geq z = Y_i}\\
    &= \prod_{j \in S} e^{-(x - z)\frac{\alpha}{n}} = e^{-(x-z)(n-i)\frac{\alpha}{n}
} \quad \text{Since $|A| = n-i$,}\\
    &= \pr{\Exp{(n-i)\frac{\alpha}{n}} > x - Y_i | x \geq z = Y_i}\\
    &= \pr{Y_i + \Exp{\alpha - \frac{i\alpha}{n}} > x | x \geq z = Y_i}\\
\end{align*}
The first line observes that $Y_{i+1} > x$ if and only if for all $j \in S$, $X_j > x$. The third line observes that $j \in S$ if and only if $X_j > Y_i$ and invokes the memoryless property (Lemma~\ref{lemma:memoryless}). By the Law of Total Probability and combining the case where $x < z$ and $x \geq z$, we obtain
\begin{align*}
    \pr{Y_{i+1} > x} &= \pr[Y_i]{\pr[A]{\pr{Y_{i+1} > x | Y_i = z, A = S} | Y_i = z}}\\
    &= \pr[Y_i]{\pr{Y_i + \Exp{\alpha - \frac{i\alpha}{n}} > x | Y_i = z}}\\
    &= \pr{Y_i + \Exp{\alpha - \frac{i\alpha}{n}} > x}
\end{align*}
as desired.
\end{proof}
From Claim~\ref{claim:sampling-finite} and taking the limit as $n \to \infty$ proves the statement.
\end{proof}

\end{document}